\documentclass[12pt, onecolumn]{IEEEtran}
\sloppy 
\usepackage{epsfig,latexsym,amssymb,amsmath,graphics,color}
\usepackage{graphicx,subfigure}
\usepackage{float}
\usepackage{verbatim}
\usepackage{multicol}
\usepackage{amstext}
\usepackage{mathrsfs}
\usepackage[all]{xy}
\usepackage{cite}
\usepackage{setspace}
\usepackage{amsthm}

\usepackage{algorithm}
\usepackage{algorithmicx}
\usepackage{algpseudocode}
\usepackage{booktabs}
\usepackage[sans]{dsfont}
\usepackage{txfonts}
\usepackage{upgreek}
\usepackage{multirow,multicol,array,booktabs}

\newtheorem{prop}{Proposition}

\newtheorem{cor}{Corollary}

\newtheorem{lm}{Lemma}

\newtheorem{thm}{Theorem}

\newcommand{\be}{\begin{eqnarray}}
\newcommand{\ee}{\end{eqnarray}}
\newcommand{\benn}{\begin{eqnarray*}}
\newcommand{\eenn}{\end{eqnarray*}}
\def\IR{\rm I \kern-0.20em R}
\newcommand{\utwi}[1]{\mbox{\boldmath $ #1$}}

\newcommand{\bthm}{\begin{thm}}
\newcommand{\ethm}{\end{thm}}

\newcommand{\bcor}{\begin{cor}}
\newcommand{\ecor}{\end{cor}}
\newcommand{\bprop}{\begin{prop}}
\newcommand{\eprop}{\end{prop}}
\newcommand{\blm}{\begin{lm}}
\newcommand{\elm}{\end{lm}}
\newcommand{\beq}{\begin{equation}}
\newcommand{\eeq}{\end{equation}}
\newcommand{\ber}{\begin{eqnarray}}
\newcommand{\eer}{\end{eqnarray}}

\newcommand{\bproof}{\begin{proof}}
\newcommand{\eproof}{\end{proof}}



%


\newcommand{\bit}{\begin{itemize}}
\newcommand{\eit}{\end{itemize}}
\newcommand{\ben}{\begin{enumerate}}
\newcommand{\een}{\end{enumerate}}
\newcommand{\bdesc}{\begin{description}}
\newcommand{\edesc}{\end{description}}
\newcommand{\beqarrn}{\begin{eqnarray*}}
\newcommand{\eeqarrn}{\end{eqnarray*}}
\newcommand{\bproofof}{\begin{proofof}}
\newcommand{\eproofof}{\end{proofof}}
\newenvironment{rem}{\begin{trivlist}\item[]{\bf
Remark:}\hspace{4mm}}{\end{trivlist}}
\newcommand{\brem}{\begin{rem}}
\newcommand{\erem}{\end{rem}}
\newenvironment{rems}{\begin{trivlist}\item[]{\bf
Remarks}\begin{itemize}}{\end{itemize}\end{trivlist}}
\newcommand{\brems}{\begin{rems}}
\newcommand{\erems}{\end{rems}}
\newtheorem{fact}{Fact}
\newcommand{\bfact}{\begin{fact}}
\newcommand{\efact}{\end{fact}}
\newtheorem{examp}{Example}
\newcommand{\bexamp}{\begin{examp}\rm}
\newcommand{\eexamp}{\end{examp}}
\newtheorem{defn}{Definition}
\newcommand{\bdefn}{\begin{defn}\rm}
\newcommand{\edefn}{\end{defn}}

\newtheorem{alg}{Algorithm}
\newcommand{\balg}{\begin{alg}}
\newcommand{\ealg}{\end{alg}}

\newtheorem{prob}{Problem}
\newcommand{\bprob}{\begin{prob}}
\newcommand{\eprob}{\end{prob}}

\newcommand{\bvtm}{\begin{verbatim}}
\newcommand{\bfig}{\begin{figure}}
\newcommand{\efig}{\end{figure}}
\newcommand{\bcen}{\begin{center}}
\newcommand{\ecen}{\end{center}}

\long\def\comment#1{}




\def \n2{{N_0 \over 2}}

\def \h5{\hspace{0.5in}}

\newcommand{\bz}{{\utwi{z}}}

\newcommand{\bN}{{\utwi{N}}}

\newcommand{\bS}{{\utwi{S}}}

\newcommand{\bZ}{{\utwi{Z}}}

\title{Multi-layer Superimposed Transmission with Symbol Boundary Offset for Optical Wireless Scattering Communication}

\author{Guanchu Wang, Chen Gong, Zhimeng Jiang and Zhengyuan Xu
\thanks{This work was supported by Key Program of National Natural Science
Foundation of China (Grant No. 61631018) and Key Research Program of Frontier
Sciences of CAS (Grant No. QYZDY-SSW-JSC003).}
\thanks{The authors are with Key Laboratory of Wireless-Optical Communications, Chinese Academy of Sciences, School of Information Science and Technology,
University of Science and Technology of China, Hefei, China.
Email: \{hegsns, zhimengj\}@mail.ustc.edu.cn, \{cgong821, xuzy\}@ustc.edu.cn.}
}

\begin{document}

\newtheorem{Proposition}{Proposition}
\newtheorem{Lemma}{Lemma}
\newtheorem{Theorem}{Theorem}

\begin{spacing}{2.0}

\maketitle

\begin{abstract}
We investigate the multi-layer superimposed transmission for optical wireless scattering communication where the symbol boundaries on different signal layers are not necessarily aligned in the time domain.
We characterize the multi-layer transmission based on a hidden markov model.
Then, we obtain the achievable rates of all signal layers and a single layer, and provide a numerical solution.
Furthermore, we propose approaches on the channel estimation as well as joint symbol detection and decoding.
Finally, both simulations and experiments are conducted to evaluate the performance of the proposed approaches, and validate the feasibility of the proposed transmission and signal detection approaches.
\end{abstract}

\begin{keywords}
NLOS scattering communication, superimposed transmission, hidden markov model, achievable rate, joint detection and decoding.
\end{keywords}

\section{Introduction}


Non-line of sight (NLOS) Ultra-violet (UV) scattering communication serves as a good candidate for the applications where radio-silence is required and the transmitter-receiver alignment is hard to guarantee due to obstacles or the user mobility.
\textcolor{red}{
Moreover, it is promising for outdoor communication under strong solar background because of negligible solar radiation in the UV spectrum \cite{xu2008ultraviolet}.}
\textcolor{red}{
Theoretical analysis \cite{xu2015effects}, numerical simulation \cite{sun2016closed} and real experiments \cite{raptis2016power, wang2017turbulence} show an extremely large path loss between the transmitter and receiver, where the received signal can be detected by photon-counting receiver and characterized by Poisson distributed number of discrete photoelectrons.}

\textcolor{red}{
The capacity of point-to-point continuous-time Poisson channel has been investigated in \cite{davis1980capacity, wyner1988capacity, frey1991information} and the capacity of discrete-time Poisson channel has been derived in \cite{lapidoth2009capacity, lapidoth2011discrete}.}
Based on the Poisson channel model, several types of channel model such as Poisson fading \cite{chakraborty2007poisson}, MIMO \cite{chakraborty2008outage}, interfering \cite{lai2015capacity},  broadcast \cite{kim2016superposition} , and multiple access \cite{mehravari1984poisson, lapidoth1998poisson, bross2001error} channels have been studied in recent years.
\textcolor{red}{
Specifically, code-division and non-orthogonal multiple transmission has been studied in
\cite{wang2018signal}, and random access packet-switched systems was proposed in \cite{raychaudhuri1981performance}.
}
Other existing works on NLOS UV scattering communication based on the Poisson and extended channel model are the channel link gain with impulse response \cite{ding2011correction, zuo2013closed}, channel estimation with inter-symbol interference \cite{gong2015channel}, signal detection with receiver diversity \cite{gong2015lmmse}, and the relay protocol \cite{gong2015non}.

\textcolor{red}{
In this work, we characterize multi-layer superimposed transmission in discrete Poisson channel, where the transmitted symbols in various layers are superimposed, and the symbol boundaries on different signal layers are not necessarily aligned.}
Specifically, we adopt hidden markov model (HMM) \cite{rabiner1986introduction, ephraim2002hidden} to characterize the superimposed channel.
Then, we conceive the achievable transmission rates for all signal layers and a single layer, and obtain the exact and approximated solution\cite{liu2018hidden}.
For receiver-side signal processing, we propose channel estimation based on expectation-maximization (EM) algorithm \cite{moon1996expectation, dempster1977maximum}, and adopt Viterbi \cite{forney1973viterbi} and Bahl-Cocke-Jelinek-Raviv (BCJR) \cite{bahl1974optimal} algorithms for symbol detection.
Furthermore, we propose iterative algorithm for maximum-likelihood/maximum a posteriori probability (ML/MAP) joint decoding \cite{hagenauer1996iterative, benedetto1997soft}.
Finally, we conduct offline experiments to evaluate the performance of the proposed approaches.
It is seen that based on the experimental measurements, the proposed approaches perform close to the simulation results with identical channel parameters.


The remainder of the paper is organized as follows.
In Section II, we characterize the superimposed NLOS scattering communication using HMM.
In Section III, we investigate the achievable transmission rates and obtain a numerical solution on the achievable transmission rate of all signal layers and a single layer.
In Section IV, we propose the EM-based channel estimation as well as joint symbol detection and decoding.
Numerical and experimental results are given in Sections V and VI, respectively.
Finally, we conclude this paper in Section VII.

\section{System Model}

\subsection{Superimposed Transmission based on Discrete Poisson Asynchronous Channel}


We consider a NLOS scattering communication system adopting on-off key (OOK) modulation that outperforms pulse-position modulation (PPM, please refer to Appendix A for more details).
The overall transmission signal can be split into multiple signal layers which are superimposed possibly in an asynchronous manner, i. e., the symbols in different layers are not necessarily aligned.
As shown in Figure \ref{fig:Channel_parallelism}, the overall transmission can be split into $L$ signal layers, denoted as layer $1, 2, ..., L$.
Let $M$ denote the number of transmitted symbols in each single layer; $T_s$ denote the symbol duration; and $\rho_1, \rho_2, \dots, \rho_L$ denote the normalized relative delay in terms of $T_s$ $\big( \sum_{i=1}^{L}\rho_i = 1 \big)$, where $\rho_i$ denotes the normalized delay between layer $i$ and layer $i+1$, for $1 \leq i \leq L$ (here layer $L+1$ equals Layer $1$).

\begin{figure}
\centering
	\includegraphics[width=0.6\textwidth]{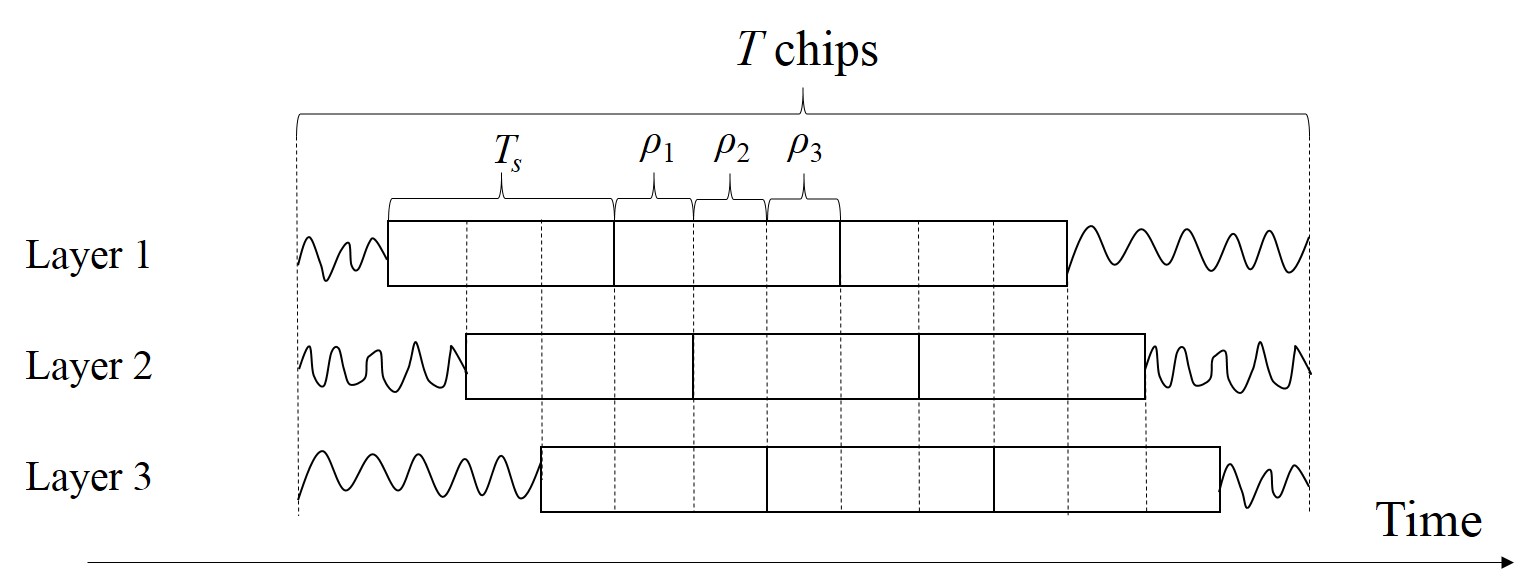}
    \caption{\label{fig:Channel_parallelism} Illustration for $3$-layer superimposed transmission.}
\end{figure}

In order to characterize the symbol duration offset in different signal layers, we divide the symbols in different signal layers into chips subjected to symbol boundaries, where the symbol detection is performed based on the received signal in each chip.
The symbol misalignment and relative delay are illustrated in Figure \ref{fig:Channel_parallelism}, where $T$ denotes the number of overall chips and $T = ML + L - 1$.

Due to the weak received signal intensity of NLOS scattering communications, the received signal can be characterized by discrete photoelectrons, whose number satisfies a Poisson distribution.
More specifically, let $\lambda_1, \lambda_2, \dots, \lambda_L$ denote the mean number of detected photoelectrons in each symbol duration, and
\textcolor{red}{$\boldsymbol{z}_1, \boldsymbol{z}_2, \dots, \boldsymbol{z}_L$ denote the transmitted binary symbols in the $L$ signal layers, where $\bz_{i} = [z_{i, 1}, z_{i, 2}, ..., z_{i, M}] \in \{ 0, 1\}^M$; $z_{i, m}$ demotes the $m^{\text{th}}$ symbol in layer $i$ for $1 \leq i \leq L$ and $1 \leq m \leq M$; and the transmitted symbols are independent of each others.}
The number of detected photoelectrons $N_t$ in the $t$-th chip for $1 \leq t \leq T$ satisfies the following Poisson distribution
\begin{equation}
\begin{aligned}
\label{eq:Poisson_channel}
\mathbb{P} (N_t = n) = \frac{\tau^n_t}{n !}( \lambda_0 + \boldsymbol{\Lambda}^T \boldsymbol{S}_t )^n e^{-\tau_t (\lambda_0 + \boldsymbol{\Lambda}^T \boldsymbol{S}_t)},
\end{aligned}
\end{equation}
where $\boldsymbol{\Lambda} = [ \lambda_1, \lambda_2, \dots, \lambda_M ]^T$; $\tau_t = \rho_{(t-1   \text{ mod } M) + 1}$; $\boldsymbol{S}_t = [z_{1, \lceil \frac{t}{L} \rceil}, z_{2, \lceil \frac{t-1}{L} \rceil}, \dots, z_{L, \lceil \frac{t-L+1}{L} \rceil}]^T$; $z_{i, 0} = 0, z_{i, M+1} = 0$ for $1 \leq i \leq L$; and $\lambda_0$ denotes the mean number of background radiation photoelectrons in a symbol duration.

\subsection{Hidden Markov Model for Asynchronous Signal Superposition}

Due to the overlap of different layers, the numbers of detected photoelectrons in adjacent chips are correlated with each other.
In the $t$-th chip, $N_t$ depends on $\boldsymbol{S}_t$, which depends on $\boldsymbol{S}_{t-1}$.
Consequently, we can adopt HMM to characterize the signal model in the chip level.

We denote $\varmathbb{S}_T = \{ \boldsymbol{S}_t | 1 \leq t \leq T \}$ and $\boldsymbol{N}_T = [ N_1, N_2, \cdots, N_T] \in \mathbb{N}^L$ as the state and observation sequences of the HMM, respectively, where $\boldsymbol{S}_t \in \mathscr{B}^{L}$, and $\mathscr{B}^{L}$ denotes the state space of the $t$-th chip given by
\begin{equation}
\begin{aligned}
\mathscr{B}^{L} = \bigg\{ \sum_{i=1}^L \theta_i \boldsymbol{e}_i \mid \theta_i \in \{ 0, 1\}, 1 \leq i \leq L \bigg\},
\end{aligned}
\end{equation}
where $\boldsymbol{e}_i$ denotes the $i$-th column of $L\times L$ identity matrix.

An HMM is determined by parameters $(\boldsymbol{\pi}_1, \boldsymbol{A}_t, \boldsymbol{B}_t)$, where $\boldsymbol{\pi}_1$, $\boldsymbol{A}_t$ and $\boldsymbol{B}_t$ denote the initial distribution, state transition matrix and observation emission matrix, respectively.
Note that the initial state depends on the first symbol in the first layer, thus $\boldsymbol{\pi}_1$ is given by
\begin{equation}
\begin{aligned}
\label{eq:HMM_pi1}
\boldsymbol{\pi}_1 &=
\Big\{ q_{1,1}, 1-q_{1,1},0,0,\dots,0 \Big\},
\end{aligned}
\end{equation}
where $q_{i,j} = \mathbb{P}(z_{i,j} = 1)$ denotes the prior possibility of symbol $z_{i,j}$ for $1 \leq i \leq L$ and $1 \leq j \leq M$.
The symbols in the same signal layer may have different prior probabilities since they may be allocated to different users.

The state transition matrix is given by $\boldsymbol{A}_t = \Big[ a_{t,i,j} | \boldsymbol{s}_{t,i} \in \mathscr{B}^{L}, \boldsymbol{s}_{t+1,j} \in \mathscr{B}^{L} \Big]$, where each element $a_{t,i,j}$ is given by
\begin{equation}
\begin{aligned}
\label{eq:atij}
a_{t,i,j} &= \mathbb{P} (\boldsymbol{S}_{t+1} = \boldsymbol{s}_{t+1,j} | \boldsymbol{S}_t = \boldsymbol{s}_{t,i})
= q_{k, \lceil \frac{t-k+2}{L} \rceil}^{ \boldsymbol{s}_{t+1,j} \cdot \boldsymbol{e}_{k}} (1-q_{k, \lceil \frac{t-k+2}{L} \rceil})^{\boldsymbol{s}_{t+1,j} \cdot \boldsymbol{e}_{k}}
\prod_{r \neq k} \big( \boldsymbol{s}_{t+1,j} \cdot \boldsymbol{e}_r \big) \odot \big( \boldsymbol{s}_{t,i} \cdot \boldsymbol{e}_r \big),
\end{aligned}
\end{equation}
and $k = (t \text{ mod }L) + 1$, which means $\boldsymbol{A}_t$ is cyclical of period $L$; $\boldsymbol{s}_{t,i}, \boldsymbol{s}_{t+1,j} \in \mathscr{B}^{L}$ take values among all possible choices of $\boldsymbol{S}_t$ and $\boldsymbol{S}_{t+1}$, respectively; Moreover, $\odot$ indicates binary logical XNOR.

\begin{proof}
Please refer to Appendix B.
\end{proof}

The observation emission matrix is given by $\boldsymbol{B}_t = \Big[b_{t,i,n+1} | \boldsymbol{s}_{t,i} \in \mathscr{B}^{L}, n \in \mathbb{N} \Big]$, where based on Equation (\ref{eq:Poisson_channel}) each element $b_{t,i,n+1}$ is given by
\begin{equation}
\begin{aligned}
b_{t,i,n+1} = \mathbb{P} (N_t = n | \boldsymbol{S}_t = \boldsymbol{s}_{t,i}) = \frac{\tau_t^n}{n !}(\lambda_0 + \boldsymbol{\Lambda}^T \boldsymbol{s}_{t,i}) e^{-\tau_t (\lambda_0 + \boldsymbol{\Lambda}^T \boldsymbol{s}_{t,i})}.
\end{aligned}
\end{equation}

\subsection{Modeling System With Superimposed Communication}

\textcolor{red}{
The superimposed transmission can be applied to multi-user communication.
Let $K$ denote the number of users.
For $K \leq L$, we can assign each signal layer or multiple layers to one user.
For $K > L$, some users have to share a common signal layer.
Figure \ref{fig:MAC_User} illustrates the scenario with $5$ users sharing $2$ layers via time-division.
\begin{figure}
\centering
	\includegraphics[width=0.8\textwidth]{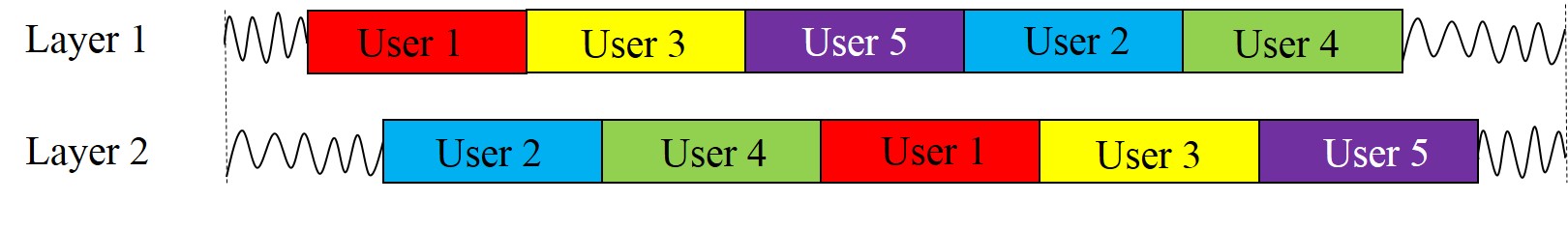}
    \caption{\label{fig:MAC_User} Two-layer transmission with five users.}
\end{figure}
}

\section{Achievable Transmission Rate}

We consider the achievable rates for HMM, and give a numerical solution to the achievable transmission rate of asynchronous signal superposition.

\subsection{Achievable Rates for HMM}

The achievable rates can be derived based on the mutual information between hidden states and observation sequences for HMM.
Let $\mathcal{L} = \{ 1, 2, \dots, L\}$ denote the entire set of signal layers; $\mathcal{U} \subset \mathcal{L}$ denote a subset of layers; and $\varmathbb{Z}_{\mathcal{U}} = \{\boldsymbol{z}_k | k\in\mathcal{U} \}$ denote the set of transmission symbols in layer set $\mathcal{U}$.
Due to the statistical independence of different transmitted symbols, the entropy and conditional entropy of transmitted symbols are given as follows,
\textcolor{red}{
\begin{equation}
\begin{aligned}
\label{eq:entropy_transmitted}
\mathrm{H} (\varmathbb{Z}_{\mathcal{L}})
&= \sum^L_{i=1} \sum^M_{j=1} \mathcal{H} (q_{i,j}),
\\
\mathrm{H} (\varmathbb{Z}_{\mathcal{U}} | \varmathbb{Z}_{\mathcal{L}\setminus \mathcal{U}})
&= \sum_{i\in\mathcal{U}} \sum^M_{j=1} \mathcal{H} (q_{i,j}),
\end{aligned}
\end{equation}
}
where $\mathcal{H} (x) = -x\log_2 x - (1-x)\log_2 (1-x)$.

The entropy and conditional entropy of the transmitted symbols given the observation sequences are given by
\begin{equation}
\begin{aligned}
\label{eq:conditional_entropy}
\mathrm{H} (\varmathbb{Z}_{\mathcal{L}} | \boldsymbol{N}_T) \! &= \!
- \mathbb{E}_{\boldsymbol{z} \in \mathscr{B}^T \atop \boldsymbol{n} \in \mathbb{N}^T} \! \Big[ \! \log_2 \! \mathbb{P} ( \varmathbb{Z}_{\mathcal{L}} \!=\! \boldsymbol{z} | \boldsymbol{N}_T \!=\! \boldsymbol{n} ) \Big],
\\
\mathrm{H} (\varmathbb{Z}_{\mathcal{U}} | \varmathbb{Z}_{\mathcal{L}\setminus \mathcal{U}}, \boldsymbol{N}_T) \! &= \! - \mathbb{E}_{\boldsymbol{z} \in \mathscr{B}^T \atop \boldsymbol{n} \in \mathbb{N}^T} \! \Big[ \! \log_2 \! \mathbb{P} ( \varmathbb{Z}_{\mathcal{U}} \!=\! \boldsymbol{z}_{\mathcal{U}} | \varmathbb{Z}_{\mathcal{L}\setminus \mathcal{U}} \!=\! \boldsymbol{z}_{\mathcal{L}\setminus \mathcal{U}}, \boldsymbol{N}_T \!=\! \boldsymbol{n} ) \Big] ,
\end{aligned}
\end{equation}
where $\mathbb{N}$ denotes the set of natural number; and ${\Omega}^T$ denotes the $T$-time expansion of set ${\Omega}$.

\textcolor{red}{
Note that for $\forall \mathcal{U} \subseteq \mathcal{L}$, $\mathcal{U} \neq \varnothing$, the overall achievable rate \cite{cover2012elements} of the signal layers in set $\mathcal{U}$ must satisfy}
\textcolor{red}{
\begin{equation}
\begin{aligned}
\label{eq:leq_achievable_rates}
\sum_{k \in \mathcal{U}} R_k \leq \frac{1}{M} \mathrm{I} (\varmathbb{Z}_{\mathcal{U}}; \boldsymbol{N}_T | \varmathbb{Z}_{\mathcal{L}\setminus \mathcal{U}}),
\end{aligned}
\end{equation}
}
where coefficient $1/M$ is due to the fact of $M$ symbols in the Markov chain for each signal layer, and $R_k$ denotes the achievable rate of layer $k$; and $\mathrm{I} (\varmathbb{Z}_{\mathcal{U}}; \boldsymbol{N}_T | \varmathbb{Z}_{\mathcal{L}\setminus \mathcal{U}})$ denotes the conditional mutual information given by
\begin{equation}
\begin{aligned}
\label{eq:mutual_information}
\mathrm{I} (\varmathbb{Z}_{\mathcal{U}}; \boldsymbol{N}_T | \varmathbb{Z}_{\mathcal{L}\setminus \mathcal{U}}) = \mathrm{H} (\varmathbb{Z}_{\mathcal{U}} | \varmathbb{Z}_{\mathcal{L}\setminus \mathcal{U}}) - \mathrm{H} (\varmathbb{Z}_{\mathcal{U}} | \varmathbb{Z}_{\mathcal{L}\setminus \mathcal{U}}, \boldsymbol{N}_T).
\end{aligned}
\end{equation}
Letting $\mathcal{U} = \{ k \}$ and $\mathcal{U} = \mathcal{L}$, we have the following two achievable rates of the asynchronous signal superposition,
\begin{equation}
\begin{aligned}
\label{eq:characteristics_parallel_channels}
R^{*}_{k} = \sup R_k = \frac{1}{M} \mathrm{I} (\bZ_k; \bN_T | \bZ_{\mathcal{L}\backslash k}),
\\
R^{*}_{\Sigma} = \sup \sum_{k=1}^L R_k = \frac{1}{M} \mathrm{I} (\varmathbb{Z}_{\mathcal{L}}, \boldsymbol{N}_T),
\end{aligned}
\end{equation}
where $R^{*}_{k}$ and $R^{*}_{\Sigma}$ denote the maximum single-user rate and sum user rate, respectively.

\subsection{Maximum Achievable Transmission Rate for a Single Layer}

We give an algorithm to obtain the maximum achievable rate of a signal layer $R^{*}_{k}$ for $1 \leq k \leq L$.
According to Equation (\ref{eq:characteristics_parallel_channels}), we have
\begin{equation}
\begin{aligned}
R^{*}_{k} = \frac{1}{M} \mathrm{I} (\bZ_k; \bN_T | \bZ_{\mathcal{L}\backslash k}) = \frac{1}{M} \sum_{i=1}^{M} \mathcal{H} (q_{k,i}) - \frac{1}{M} \mathrm{H} (\bZ_k | \bZ_{\mathcal{L}\backslash k}, \bN_T).
\end{aligned}
\end{equation}

We have the following propositions on $\mathrm{H} (\bZ_k | \bZ_{\mathcal{L}\backslash k}, \bN_T)$.

\begin{Proposition}
The chain rule on the conditional probabilities are given as follows
\begin{equation}
\begin{aligned}
\label{eq:Prop1}
\mathbb{P} (\boldsymbol{Z}_k | \boldsymbol{Z}_{\mathcal{L}\backslash k}, \boldsymbol{N}_T) = \prod^{M}_{j=1} \mathbb{P} \big(Z_{k, j} | \{ Z_{i,\lceil \frac{t_j-i+1}{L} \rceil} \}, \{ N_{t_j} \} \big),
\end{aligned}
\end{equation}
where $i \in \mathcal{L} \backslash k$, $k+(j-1)L \leq t_j \leq k+jL-1$;
and $\mathbb{P} \big(Z_{k, j} | \{ Z_{i,\lceil \frac{t_j-i+1}{L} \rceil} \}, \{ N_{t_j} \} \big)$ is the conditional probability of $Z_{k,j}$ given sets $\{Z_{i, \lceil \frac{t_j-i+1}{L} \rceil}\}$ and $\{N_{t_j}\}$.
\end{Proposition}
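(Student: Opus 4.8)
The plan is to expand the joint law of the symbols and observations using the HMM factorization and then show that, once every layer except $k$ is conditioned on, it decouples into $M$ factors, one per symbol of layer $k$. First I would write, using the a priori independence of all transmitted symbols together with the emission structure of the HMM (the observations $N_t$ are conditionally independent given the state sequence, and each state $\boldsymbol{S}_t$ is a deterministic function of $\{\boldsymbol{z}_i\}_{i=1}^{L}$),
\[
\mathbb{P}\big(\boldsymbol{N}_T, \{\boldsymbol{Z}_i\}_{i=1}^{L}\big)
= \Big(\prod_{i=1}^{L}\prod_{j=1}^{M}\mathbb{P}(Z_{i,j})\Big)\prod_{t=1}^{T}\mathbb{P}(N_t \mid \boldsymbol{S}_t).
\]

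Next I would pin down, for each chip $t$, the unique symbol of layer $k$ entering $\boldsymbol{S}_t$: from $\boldsymbol{S}_t = [z_{1,\lceil t/L\rceil},\dots,z_{L,\lceil (t-L+1)/L\rceil}]^T$ the $k$-th coordinate is the single layer-$k$ symbol $Z_{k,\,j(t)}$ with $j(t)=\lceil (t-k+1)/L\rceil$, and a short computation gives $j(t)=j$ if and only if $k+(j-1)L \le t \le k+jL-1$, so the chips sharing a given layer-$k$ symbol index $j$ form a block $I_j$ of exactly $L$ consecutive chips (chips with $j(t)\le 0$ or $j(t)>M$ involve only the fixed boundary values $z_{k,0}=z_{k,M+1}=0$ and contribute a factor free of $\boldsymbol{Z}_k$). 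Hence $\prod_{t=1}^{T}\mathbb{P}(N_t\mid\boldsymbol{S}_t)$ equals, up to a factor not depending on $\boldsymbol{Z}_k$, $\prod_{j=1}^{M}\prod_{t\in I_j}\mathbb{P}(N_t\mid Z_{k,j},\boldsymbol{\zeta}_j)$, where $\boldsymbol{\zeta}_j$ collects the (conditioned) remaining-layer symbols occurring in the states indexed by $I_j$, namely $\{Z_{i,\lceil (t-i+1)/L\rceil}: i\neq k,\ t\in I_j\}$.

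Then I would divide by $\mathbb{P}(\boldsymbol{N}_T,\boldsymbol{Z}_{\mathcal{L}\backslash k})=\sum_{\boldsymbol{z}_k\in\{0,1\}^M}\mathbb{P}(\boldsymbol{N}_T,\boldsymbol{Z}_{\mathcal{L}\backslash k},\boldsymbol{z}_k)$: the factors independent of $\boldsymbol{Z}_k$ cancel, the surviving numerator is $\prod_{j=1}^M \mathbb{P}(Z_{k,j})\prod_{t\in I_j}\mathbb{P}(N_t\mid Z_{k,j},\boldsymbol{\zeta}_j)$, and because the sum over $\boldsymbol{z}_k$ separates coordinatewise the denominator factorizes as $\prod_{j=1}^M\sum_{z\in\{0,1\}}\mathbb{P}(Z_{k,j}=z)\prod_{t\in I_j}\mathbb{P}(N_t\mid z,\boldsymbol{\zeta}_j)$. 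Dividing term by term gives
\[
\mathbb{P}(\boldsymbol{Z}_k\mid\boldsymbol{Z}_{\mathcal{L}\backslash k},\boldsymbol{N}_T)
=\prod_{j=1}^{M}\frac{\mathbb{P}(Z_{k,j})\prod_{t\in I_j}\mathbb{P}(N_t\mid Z_{k,j},\boldsymbol{\zeta}_j)}{\sum_{z}\mathbb{P}(Z_{k,j}=z)\prod_{t\in I_j}\mathbb{P}(N_t\mid z,\boldsymbol{\zeta}_j)},
\]
and by Bayes' rule (again using that within block $I_j$ the observations are conditionally independent given their states, and those states are functions only of $Z_{k,j}$ and $\boldsymbol{\zeta}_j$) each factor equals $\mathbb{P}\big(Z_{k,j}\mid\{Z_{i,\lceil(t_j-i+1)/L\rceil}\},\{N_{t_j}\}\big)$ with $t_j$ ranging over $k+(j-1)L\le t_j\le k+jL-1$, which is the claim.

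The conceptual content is light; the main obstacle will be the index bookkeeping: verifying $j(t)=j\iff k+(j-1)L\le t\le k+jL-1$ so that the blocks $I_j$ line up with the ranges in the statement, handling the boundary chips at the two ends of the chain where $j(t)\notin\{1,\dots,M\}$, and confirming that the only remaining-layer symbols on which the $j$-th factor depends are exactly those in $\boldsymbol{\zeta}_j$, so that no conditioning variable outside the local window of $L$ chips is required. Given the period-$L$ cyclic structure of $\boldsymbol{A}_t$, these checks are routine once spelled out.
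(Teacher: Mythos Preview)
Your proposal is correct and follows essentially the same approach as the paper: both arguments rest on the conditional independence $\mathbb{P}(\boldsymbol{N}_T\mid\boldsymbol{Z}_{\mathcal L})=\prod_t\mathbb{P}(N_t\mid\boldsymbol{S}_t)$, the a~priori independence of the transmitted symbols, and the observation that $Z_{k,j}$ enters only the block of $L$ consecutive chips $k+(j-1)L\le t\le k+jL-1$. The only difference is organizational: the paper peels off $Z_{k,1}$ first and then recurses on $\boldsymbol{Z}_{k,[2,M]}$, whereas you carry out the full block factorization in one step; the index checks you flag (the equivalence $j(t)=j\iff k+(j-1)L\le t\le k+jL-1$ and the boundary chips with $j(t)\in\{0,M+1\}$) are exactly what the paper uses implicitly and are indeed routine.
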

\begin{proof}
Please refer to Appendix C.
\end{proof}

\begin{Proposition}

The conditional entropy of a single layer is given by
\begin{equation}
\begin{aligned}
\label{eq:2users_conditional_entropy}
\mathrm{H} (\boldsymbol{Z}_k | \boldsymbol{Z}_{\mathcal{L}\backslash k}, \boldsymbol{N}_T)
&= \sum^{M}_{j=1} \sum_{ Z_{k,j} \in \mathscr{B}} \Big( \!\!\! \sum_{Z_{i, \lceil (t-i+1)/L \rceil}\in\mathscr{B}} \!\!\!\!\!\! \Big)^{k+(j-1)L \leq t \leq k+jL-1 \hfill \atop i\in\mathcal{L}\backslash k \hfill} \!\!\!\!\!\!\!\!\!\!\!\!\!\!\!\! \mathbb{P}(Z_{k,j}) \Bigg[ \prod_{ i\in\mathcal{L}\backslash k} \prod_{t=k+(j-1)L}^{k+jL-1} \mathbb{P} (Z_{i,\lceil \frac{t-i+1}{L} \rceil}) \Bigg] \sum_{ \{ N_{t_j} \} \in \mathbb{N}^L} \Bigg[ \prod_{t=k+(j-1)L}^{k+jL-1} \mathbb{P} (N_t | Z_{k,j}, \{Z_{i,\lceil \frac{t-i+1}{L} \rceil}\}) \Bigg]
\\
&\log_2  \frac{\mathbb{P} \big(Z_{k,j}\big) \prod_{t=k+(j-1)L}^{k+jL-1} \mathbb{P} (N_t | Z_{k,j}, \{Z_{i,\lceil \frac{t-i+1}{L} \rceil}\})}{\sum_{ Z_{k,j} \in \mathscr{B}} \mathbb{P} \big(Z_{k,j}\big) \prod_{t=k+(j-1)L}^{k+jL-1} \mathbb{P} (N_t | Z_{k,j}, \{Z_{i,\lceil \frac{t-i+1}{L} \rceil}\})},
\end{aligned}
\end{equation}
where $(\sum_{Z_{i,t}\in\mathscr{B}})^{i \in \{ \phi_1, \phi_2, \ldots\} \atop t \in \{ \omega_1, \omega_2, \ldots\}}$ is the abbreviation of $\sum_{Z_{\phi_1, \omega_1}\in\mathscr{B}} \sum_{Z_{\phi_1, \omega_2}\in\mathscr{B}} \ldots
\sum_{Z_{\phi_2, \omega_1}\in\mathscr{B}} \sum_{Z_{\phi_2, \omega_2}\in\mathscr{B}} \ldots$; $\mathbb{P} (z_{k,j}) = q_{k,j}^{z_{k,j}}(1-q_{k,j})^{(1-z_{k,j})}$; and
\begin{equation}
\begin{aligned}
\label{eq:Pr_N__X_Y}
\mathbb{P} (N_t| Z_{k,j}, \{ Z_{i,\lceil \frac{t-i+1}{L} \rceil} \}) &= \frac{\tau_t^{N_t}}{N_t !} \bigg( \lambda_0 + \lambda_k Z_{k,j} + \sum_{i \in \mathcal{L}\setminus k}\lambda_i Z_{i,\lceil \frac{t-i+1}{L} \rceil} \bigg)^{N_t} e^{-\tau_t (\lambda_0 + \lambda_k Z_{k,j} + \sum_{i \in \mathcal{L}\setminus k}\lambda_i Z_{i,\lceil \frac{t-i+1}{L} \rceil})}.
\end{aligned}
\end{equation}
\end{Proposition}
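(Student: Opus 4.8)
The plan is to start from the chain-rule factorization established in Proposition 1, namely
$\mathbb{P}(\boldsymbol{Z}_k \mid \boldsymbol{Z}_{\mathcal{L}\backslash k}, \boldsymbol{N}_T) = \prod_{j=1}^M \mathbb{P}\big(Z_{k,j} \mid \{Z_{i,\lceil (t_j-i+1)/L\rceil}\}, \{N_{t_j}\}\big)$,
and to substitute it directly into the definition of the conditional entropy from Equation~(\ref{eq:conditional_entropy}). Taking the logarithm of the product turns it into a sum over $j$ of $-\log_2 \mathbb{P}\big(Z_{k,j}\mid \{Z_{i,\lceil (t_j-i+1)/L\rceil}\},\{N_{t_j}\}\big)$, so that by linearity of expectation
$\mathrm{H}(\boldsymbol{Z}_k\mid \boldsymbol{Z}_{\mathcal{L}\backslash k},\boldsymbol{N}_T)$ decomposes into $\sum_{j=1}^M \mathrm{H}\big(Z_{k,j}\mid \{Z_{i,\lceil(t_j-i+1)/L\rceil}\}, \{N_{t_j}\}\big)$. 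This reduces the problem to computing a single per-symbol conditional entropy term.

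Next, for the $j$-th term I would expand the expectation explicitly over the relevant finite/countable index sets: $Z_{k,j}$ ranges over $\mathscr{B}=\{0,1\}$; the conditioning symbols $Z_{i,\lceil(t-i+1)/L\rceil}$ for $i\in\mathcal{L}\backslash k$ and $k+(j-1)L\le t\le k+jL-1$ range over $\mathscr{B}$ (this is the $(\sum_{\cdots})^{\cdots}$ abbreviation introduced in the statement); and the observations $\{N_{t_j}\}$ range over $\mathbb{N}^L$. Writing out the joint probability of $Z_{k,j}$ and the observations $\{N_t\}$ — conditioned on the other-layer symbols — as the product $\mathbb{P}(Z_{k,j})\prod_{t} \mathbb{P}(N_t\mid Z_{k,j},\{Z_{i,\lceil(t-i+1)/L\rceil}\})$ uses the fact that, conditioned on all the symbols that influence chip $t$, the counts $N_t$ are independent across chips (they are conditionally Poisson with the rate given in Equation~(\ref{eq:Poisson_channel})), and that $Z_{k,j}$ is independent of the other-layer symbols with prior $q_{k,j}$. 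The posterior $\mathbb{P}(Z_{k,j}\mid \{Z_{i,\cdot}\},\{N_{t_j}\})$ appearing inside the $\log_2$ is then obtained by Bayes' rule, giving exactly the ratio displayed in Equation~(\ref{eq:2users_conditional_entropy}), with the denominator being the marginalization $\sum_{Z_{k,j}\in\mathscr{B}}$ of the numerator. Finally, $\mathbb{P}(N_t\mid Z_{k,j},\{Z_{i,\cdot}\})$ is just a restatement of Equation~(\ref{eq:Poisson_channel}) with $\boldsymbol{\Lambda}^T\boldsymbol{S}_t = \lambda_k Z_{k,j} + \sum_{i\in\mathcal{L}\backslash k}\lambda_i Z_{i,\lceil(t-i+1)/L\rceil}$, yielding Equation~(\ref{eq:Pr_N__X_Y}).

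The main obstacle is purely bookkeeping rather than conceptual: one must verify carefully that the index set of ``influencing'' other-layer symbols for the $j$-th symbol of layer $k$ is precisely $\{Z_{i,\lceil(t-i+1)/L\rceil} : i\in\mathcal{L}\backslash k,\ k+(j-1)L\le t\le k+jL-1\}$, i.e.\ that these are exactly the symbols appearing in the vectors $\boldsymbol{S}_t$ over the chip window $[k+(j-1)L, k+jL-1]$ that carries $Z_{k,j}$, and that the chips in this window are disjoint across $j$ so that the per-$j$ terms genuinely separate. This follows from the structure $\boldsymbol{S}_t = [z_{1,\lceil t/L\rceil}, z_{2,\lceil (t-1)/L\rceil},\dots,z_{L,\lceil(t-L+1)/L\rceil}]^T$ together with the cyclic period-$L$ structure of $\boldsymbol{A}_t$, but one should check the ceiling-function arithmetic and the boundary conventions $z_{i,0}=z_{i,M+1}=0$ at $j=1$ and $j=M$. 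Once the index set is pinned down, the remaining steps — pulling the log of the product into a sum, applying conditional independence of the Poisson counts, and invoking Bayes' rule — are routine, and assembling them gives Equation~(\ref{eq:2users_conditional_entropy}) directly.
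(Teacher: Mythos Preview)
Your proposal is correct and follows essentially the same approach as the paper's proof: both start from the factorization of Proposition~1, substitute into the definition of $\mathrm{H}(\boldsymbol{Z}_k\mid \boldsymbol{Z}_{\mathcal{L}\backslash k},\boldsymbol{N}_T)$, convert the log of the product into a sum over $j$, and then for each $j$ expand the expectation over the relevant local variables using conditional independence of the Poisson counts and Bayes' rule to obtain the displayed ratio. The paper's derivation (its Equation~(\ref{eq:entropy_2users_proof})) is exactly this chain of equalities written out line by line, and your identification of the index bookkeeping and the boundary conventions at $j=1,M$ as the only nontrivial check matches what the paper addresses when it subsequently specializes to the constant-prior case.
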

Specifically, for single user communication the prior probability of transmitted symbols remains constant, i. e, $q_{i,j} = q$ for $1 \leq i \leq L$ and $1 \leq j \leq M$, and the entropy of a single layer can be further simplified into
\begin{equation}
\begin{aligned}
\label{eq:2users_conditional_entropy}
\mathrm{H} (\boldsymbol{Z}_k | \boldsymbol{Z}_{\mathcal{L}\backslash k}, \boldsymbol{N}_T)
&= M \sum_{ Z_{k} \in \mathscr{B}} \bigg( \sum_{ Z_{i,2} \in \mathscr{B}} \sum_{ Z_{i,3} \in \mathscr{B}} \bigg)^{1 \leq i < k}
\bigg( \sum_{ Z_{i,1} \in \mathscr{B}} \sum_{ Z_{i,2} \in \mathscr{B}} \bigg)^{k < i \leq L} \mathbb{P}(Z_{k}) \Bigg[ \prod_{1 \leq i < k} \mathbb{P} (Z_{i,2}) \mathbb{P} (Z_{i,3}) \prod_{k < i \leq L} \mathbb{P} (Z_{i,1}) \mathbb{P} (Z_{i,2}) \Bigg]
\\
&\!\!\!\!\!\!\!\!\sum_{ \{ N_{k+L}, \ldots, N_{k+2L-1} \} \in \mathbb{N}^L} \Bigg[ \prod_{t=k+L}^{k+2L-1} \mathbb{P} (N_t | Z_k, \{Z_{i,\lceil \frac{t-i+1}{L} \rceil}\}) \Bigg]
\log_2  \frac{\mathbb{P} \big(Z_{k}\big) \prod_{t=k+L}^{k+2L-1} \mathbb{P} (N_t | Z_{k}, \{Z_{i,\lceil \frac{t-i+1}{L} \rceil}\})}{\sum_{ Z_{k} \in \mathscr{B}} \mathbb{P} \big(Z_{k}\big) \prod_{t=k+L}^{k+2L-1} \mathbb{P} (N_t | Z_{k}, \{Z_{i,\lceil \frac{t-i+1}{L} \rceil}\})}.
\end{aligned}
\end{equation}
\begin{proof}
Please refer to Appendix D.
\end{proof}

\subsection{Maximum Achievable Sum Rate}

We give an algorithm to obtain the achievable sum rate $R^{*}_{\Sigma}$.
According to Equation (\ref{eq:characteristics_parallel_channels}), we have
\begin{equation}
\begin{aligned}
R^{*}_{\Sigma} = \frac{1}{M} \mathrm{I} (\varmathbb{Z}_{\mathcal{L}}; \boldsymbol{N}_T) = \frac{1}{M} \sum_{i=1}^{M} \sum_{k=1}^{L} \mathcal{H} (q_{k,i}) - \frac{1}{M} \mathrm{H} (\varmathbb{Z}_{\mathcal{L}} | \boldsymbol{N}_T).
\end{aligned}
\end{equation}
Note that the computational complexity of $\mathrm{H} (\varmathbb{Z}_{\mathcal{L}} | \boldsymbol{N}_T)$ grows exponentially with $T$ due to exhaustive enumeration of the state and observation sequences in $\mathscr{B}^{T}$ and $\mathbb{N}^T$.
Consequently, brute-force computation on the exact value is intractable for large $T$.
\textcolor{red}{
The computational complexity can be reduced via sampling $\mathscr{B}^{T}$ and $\mathbb{N}^T$, and the solution for conditional entropies can be approximated by the empirical mean according to the following equation,}
\begin{equation}
\begin{aligned}
\label{eq:conditional_entropy_app}
\mathrm{H} (\varmathbb{Z}_{\mathcal{L}} | \boldsymbol{N}_T) \thickapprox \mathbb{E}_{\boldsymbol{z} \in \Psi_{z} \atop \boldsymbol{n} \in \Psi_n} \Big[ \mathrm{H} (\varmathbb{Z}_{\mathcal{L}} | \boldsymbol{N}_T = \boldsymbol{n}) \Big],
\end{aligned}
\end{equation}
where $\Psi_{z} \subset \mathscr{B}^{T}$ and $\Psi_n \subset \mathbb{N}^T$ denote the set of sufficiently many samples on $\mathscr{B}^{T}$ and $\mathbb{N}^T$ such that the empirical mean becomes converged, respectively.

We resort to Monte Carlo method, which keeps generating random states and observation sequences based on the initial state distribution, the transition probability matrices, and the observation emission matrices.
For each state and observation sequence realization, we have that
\begin{equation}
\begin{aligned}
\label{eq:H_S__N}
\mathrm{H} (\varmathbb{Z}_{\mathcal{L}} | \boldsymbol{N}_T = \boldsymbol{n}) = \mathrm{H} (\varmathbb{S}_{T} | \boldsymbol{N}_T = \boldsymbol{n}),
\end{aligned}
\end{equation}
where efficient computation of the conditional entropy in Equation (\ref{eq:H_S__N}) can be conducted following \cite{hernando2005efficient}.

\subsection{Power Allocation of Overlapped Transmission}
\textcolor{red}{
We regard the achievable rate as the objective function of power allocation.
Generally, the practical issue can be summarized as the following two cases.
}

\textcolor{red}{
\textbf{Case 1:} Given $\lambda_s$, maximize the sum achievable rate $ R^{*}_{\Sigma}$, subject to $\sum_{k=1}^{L} \lambda_k = \lambda_s$.
}

\textcolor{red}{
\textbf{Case 2:} Given $\lambda_s, i$ and $R_{jinf}$, for $1 \leq j \leq L, j \neq i$, maximize the achievable rate $R^*_{i}$ of layer $i$, subject to $R^*_{j} \geq R_{jinf}$ and $\sum^L_{k=1} \lambda_k = \lambda_s$.
}

The numerical solution for $L = 2$ is provided in Section VI.C.

\section{Channel Estimation and Symbol Detection}

We present the receiver-side signal processing including channel estimation, symbol detection as well as joint detection and decoding.

\subsection{Channel Estimation Algorithm}

We can employ pilot sequences to estimate the mean number of detected photoelectrons of each state. However, considering the pilots on all signal layers, the overhead is still non-negligible.
\textcolor{red}{
In this work, the channel estimation can be performed based on pilot sequences on certain signal layers but not necessarily on all, which is called partial pilot-based channel estimation, as illustrated in Figure \ref{fig:channel_estimation_illustration}.}
\begin{figure}
\centering
	\includegraphics[width=0.4\textwidth]{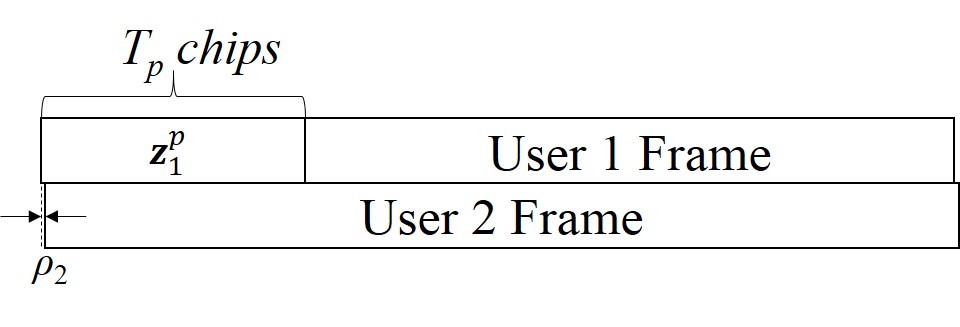}
    \caption{\label{fig:channel_estimation_illustration}Illustration of partial pilot-based channel estimation for $L = 2, L_p = 1$.}
\end{figure}
Without loss of generality, we assume to transmit pilot sequences $\boldsymbol{Z}^p = [\bz^p_1, \bz^p_2, \dots, \bz^p_{L_p}]$ in $L_p$ layers, where $\bz^p_i$ denotes the pilot sequence in layer $i$ for $1 \leq i \leq L_p$ and $0 \leq L_p < L$.
Let ${\bS}^p_1, {\bS}^p_2, \ldots, {\bS}^p_{T_p}$ denote the state sequence for channel estimation, where $T_p$ denotes the number of chips.
We have that ${\bS}^p_{t} = [z^p_{1, \lceil \frac{t}{L} \rceil},  \dots, z^p_{L_p, \lceil \frac{t-1}{L} \rceil}, z_{L_p+1, \lceil \frac{t-1}{L} \rceil}, \dots, z_{L, \lceil \frac{t-L+1}{L} \rceil}]^T$, and $\hat{\boldsymbol{\Lambda}}$ is estimated based on EM algorithm.
Let $\bN^p = [ N^p_1, N^p_2, \ldots, N^p_{T_p}]$ denote the number of received photoelectrons in each chip for channel estimation, where $\bN^p$ is the observation sequence of ${\bS}^p_t$ for $1 \leq t \leq T_p$.
The estimation for $\hat{\boldsymbol{\Lambda}}$ is processed by $V$ iterations, and in each iteration the updating rule is provided as follows.

\textbf{E-step}: In the $v$th iteration, based on the estimate result $\hat{\lambda}^{(v-1)}_{\boldsymbol{s}_i}$ in the $(v-1)$th iteration, the a posterior probability of $\boldsymbol{S}^p_t$ is given by
\begin{equation}
\begin{aligned}
Q^{(v)}(\boldsymbol{S}^p_t = \boldsymbol{s}_i)
&= \mathbb{P} (\boldsymbol{S}^p_t = \boldsymbol{s}_i | \bN^p, \lambda_{\boldsymbol{s}_i} = \hat{\lambda}^{(v-1)}_{\boldsymbol{s}_i})
\\
&= \frac{\mathbb{P} (\bN^p, \boldsymbol{S}^p_t = \boldsymbol{s}_i | \lambda_{\boldsymbol{s}_i} = \hat{\lambda}^{(v-1)}_{\boldsymbol{s}_i})} {\sum_{\boldsymbol{s}_i \in \mathscr{B}^{L \setminus L_p}} \mathbb{P} (\bN^p, \boldsymbol{S}^p_t = \boldsymbol{s}_i| \lambda_{\boldsymbol{s}_i} = \hat{\lambda}^{(v-1)}_{\boldsymbol{s}_i})},
\end{aligned}
\end{equation}
where $\mathscr{B}^{L \setminus L_p} = \big\{ \sum_{i=1}^{L_p} z^p_{i, \lceil \frac{t-i+1}{L} \rceil} \boldsymbol{e}_i + \sum_{i=L_p+1}^L \theta_i \boldsymbol{e}_i \mid \theta_i \in \{ 0, 1\}, L_p+1 \leq i \leq L \big\}$, and
\begin{equation}
\begin{aligned}
\mathbb{P} (\bN^p, \boldsymbol{S}^p_t = \boldsymbol{s}_i | \lambda_{\boldsymbol{s}_i} = \hat{\lambda}^{(v-1)}_{\boldsymbol{s}_i}) = \frac{ \Big( \tau_t \hat{\lambda}^{(v-1)}_{\boldsymbol{s}_i} \Big)^{N^p_t} }{ N^p_{t} !} e^{-\tau_t \hat{\lambda}^{(v-1)}_{\boldsymbol{s}_i}}.
\end{aligned}
\end{equation}

\textbf{M-step}: Given a posterior probability $Q^{(v)}(\boldsymbol{S}^p_t = \boldsymbol{s}_i)$ for the $v$th iteration, the ML-estimation for $\hat{\boldsymbol{\Lambda}}^{(v)}_T = \{ \hat{\lambda}^{(v)}_{\boldsymbol{s}_i} | \boldsymbol{s}_i \in \mathscr{B}^{L \setminus L_p}\}$ is given by
\begin{equation}
\begin{aligned}
\label{eq:M_step}
\hat{\lambda}^{(v)}_{\boldsymbol{s}_j} = \frac{ \sum^{T_p}_{t=1} Q^{(v)}(\boldsymbol{S}^p_t = \boldsymbol{s}_i) N^p_{t} }{ \sum^{T_p}_{t=1} Q^{(v)}(\boldsymbol{S}^p_t = \boldsymbol{s}_i) \tau_t},
\end{aligned}
\end{equation}
where the preset initial $\hat{\boldsymbol{\Lambda}}^{(0)}$ must satisfy $\big( \hat{\lambda}^{(0)}_{\boldsymbol{s}_i} - \hat{\lambda}^{(0)}_{\boldsymbol{s}_j} \big) \big( {\lambda}_{\boldsymbol{s}_i} - {\lambda}_{\boldsymbol{s}_j} \big) > 0$ for $i \neq j$ and ${\lambda}_{\boldsymbol{s}_i} \neq {\lambda}_{\boldsymbol{s}_j}$.
\begin{proof}
Please refer to Appendix E.
\end{proof}

\subsection{HMM-Based Symbol Detection}

Based on HMM, the receiver aims to detect state sequence $\varmathbb{S}_T$ according to the observation sequence $\boldsymbol{N}_T$ and ($\boldsymbol{\pi}_1$, $\boldsymbol{A}_t$, $\boldsymbol{B}_t$).
The trellis diagram for HMM is adopted to find the optimal state transition path maximizing the likelihood function or a posteriori probability.
Figure \ref{fig:gridgraph2} illustrates the trellis diagram for $L = 3$, where each state $\boldsymbol{S}_t$ is expressed as $\{z_{k,\lceil \frac{t-k+1}{L} \rceil} | 1 \leq k \leq L\}$, and each branch between adjacent states corresponds to a non-zero element of $\boldsymbol{A}_t$.
We adopt Viterbi and Bahl-Cocke-Jelinek-Raviv (BCJR) algorithms to maximize the likelihood function $\mathbb{P} (\boldsymbol{N}_T| \varmathbb{S}_T = \boldsymbol{s}_T )$ and a posteriori probability $\mathbb{P} (\varmathbb{Z}_T = \boldsymbol{z}_T | \boldsymbol{N}_T)$, respectively, and minimizes the error rate of sequence and symbol detection, respectively.
\begin{figure}
\centering
	\includegraphics[width=1.0\textwidth]{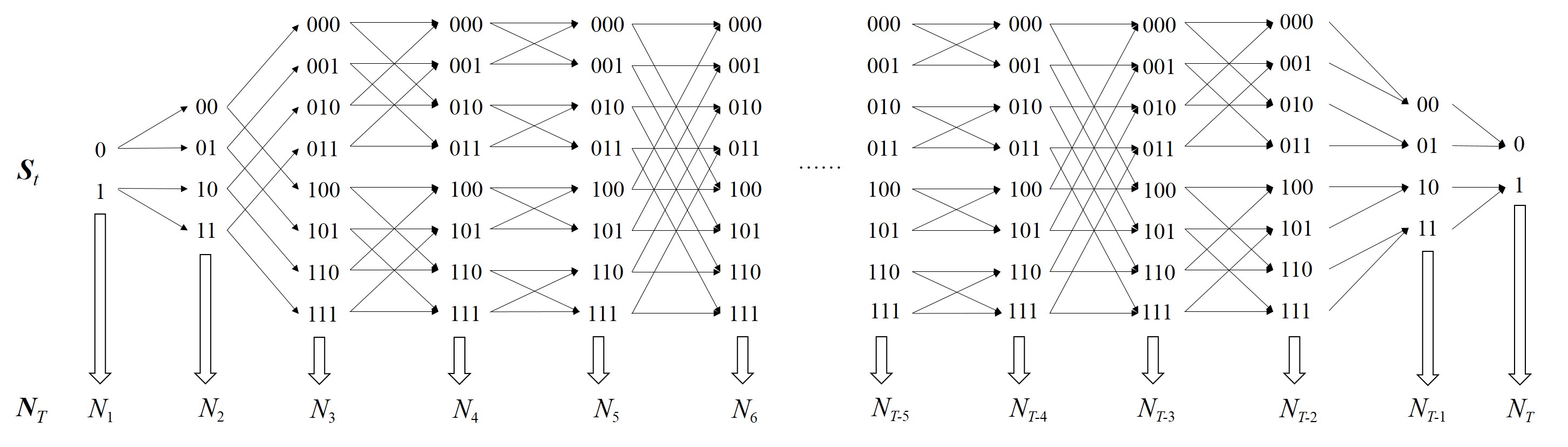}
    \caption{\label{fig:gridgraph2}The trellis diagram for $L = 3$.}
\end{figure}


For Viterbi algorithm, we maximize the log-likelihood function of state sequence summarized as follows
\begin{equation}
\begin{aligned}
\hat{\varmathbb{S}}_T = \arg \max \log \mathbb{P} (\boldsymbol{N}_T| \varmathbb{S}_T = \boldsymbol{s}_T ) = \arg \max \limits_{\boldsymbol{S}_t \in \mathscr{B}^{L}} \sum_{t = 1}^T N_t \log \tau_t \lambda_{\boldsymbol{S}_t} - \tau_t \lambda_{\boldsymbol{S}_t},
\end{aligned}
\end{equation}
where $\lambda_{\boldsymbol{S}_t}$ for $\boldsymbol{S}_t \in \mathscr{B}^{L}$ can be obtained from channel estimation.

Letting $\mathcal{L} (N_t | \boldsymbol{S}_t) = N_t \log \tau_t \lambda_{\boldsymbol{S}_t} - \tau_t \lambda_{\boldsymbol{S}_t}$, we have that $\mathcal{L} (\boldsymbol{N}_t | \varmathbb{S}_t) = \mathcal{L} (N_t | \boldsymbol{S}_t) + \mathcal{L} (\boldsymbol{N}_{t-1} | \varmathbb{S}_{t-1})$ for $2 \leq t \leq T$.
Thus dynamic programming is adopted with the following updated equation
\begin{equation}
\begin{aligned}
\max \mathcal{L} (\boldsymbol{N}_{t+1} | \varmathbb{S}_t, \boldsymbol{S}_{t+1,j})
= \mathcal{L} (N_{t+1} | \boldsymbol{S}_{t+1,j}) + \max \limits_{a_{i,j,t} \neq 0} \big\{ \mathcal{L} (\boldsymbol{N}_{t} | \varmathbb{S}^{t-1}, \boldsymbol{S}_{t,i}) \big\},
\end{aligned}
\end{equation}
which is initialized by $\mathcal{L} (\boldsymbol{N}_1 | \varmathbb{S}_1) = \mathcal{L} (N_1 | \boldsymbol{S}_1) \sim N_1 \log \tau_1(\lambda_0 + \lambda_{\boldsymbol{S}_1}) -\tau_1 (\lambda_0 + \lambda_{\boldsymbol{S}_1})$.
The detected symbol sequence can be retrieved via tracing back the optimal path.

For BCJR Algorithm, we maximize the posterior probability for each symbol $z_{k,i}$ for $1 \leq k \leq L$ and $1 \leq i \leq M$ as follows
\begin{equation}
\begin{aligned}
\label{eq:bcjr_max}
\hat{z}_{i, j} &= \arg \max \log \mathbb{P} (z_{k,i} | \boldsymbol{N}_T)
\\
&= \arg \max \log \mathbb{P} (\varmathbb{S}_{z_{k,i}} | \boldsymbol{N}_T)
\\
&\sim \arg \max \log \mathbb{P} (\varmathbb{S}_{z_{k,i}}, \boldsymbol{N}_T),
\end{aligned}
\end{equation}
where $\varmathbb{S}_{z_{k,i}} = \{ \bS_t | t = (i-1)L+k, (i-1)L+k+1, \dots, iL+k-1 \}$.

To obtain $\mathbb{P} (\varmathbb{S}_{z_{k,i}}, \boldsymbol{N}_T)$, we define the following probability functions
\begin{equation}
\begin{aligned}
\alpha_t(\boldsymbol{s}) &= \mathbb{P}(\boldsymbol{S}_t = \boldsymbol{s}, \boldsymbol{N}_t),
\\
\beta_t(\boldsymbol{s}) &= \mathbb{P}(\boldsymbol{N}_{[t+1,T]} | \boldsymbol{S}_t = \boldsymbol{s}),
\\
\gamma_t(\boldsymbol{v}, \boldsymbol{s}) &= \mathbb{P}(N_t, \boldsymbol{S}_t = \boldsymbol{s} | \boldsymbol{S}_{t-1} = \boldsymbol{v}),
\end{aligned}
\end{equation}
where $\boldsymbol{N}_{[a,b]} = \{ N_t | a \leq t \leq b \}$.
Note that we have
\begin{equation}
\begin{aligned}
&\mathbb{P} (\varmathbb{S}_{z_{k,i}} = \boldsymbol{\$}_{k,i}, \boldsymbol{N}_T) = \alpha_{(i-1)L+k}(\boldsymbol{s}_{(i-1)L+k})
\beta_{iL+k-1}(\boldsymbol{s}_{iL+k-1})
\!\! \prod^{iL+k-2}_{t=(i-1)L+k} \!\! \gamma_t(\boldsymbol{s}_{t}, \boldsymbol{s}_{t+1}),
\end{aligned}
\end{equation}
where $\boldsymbol{\$}_{k,i} = \{ \boldsymbol{s}_t | (i-1)L+k \leq t \leq iL+k-1 \}$.
Furthermore, we have that  $\gamma_t(\boldsymbol{s}_{t-1,i}, \boldsymbol{s}_{t,j}, n) = a_{t-1, i, j} b_{t, j,n+1}$ for $\boldsymbol{s}_{t-1,i}, \boldsymbol{s}_{t,j} \in \mathscr{B}^{L}$ and $n \in \mathbb{N}$.
Then, the calculations of $\alpha(\boldsymbol{s}_{t})$ and $\beta(\boldsymbol{s}_{t})$ are conducted according to the following recursive equations
\begin{equation}
\begin{aligned}
\alpha_t(\boldsymbol{s}_t) &= \sum_{\boldsymbol{s}_{t-1}\in\mathscr{B}^{L}} \alpha_{t-1}(\boldsymbol{s}_{t-1}) \gamma_t(\boldsymbol{s}_{t-1}, \boldsymbol{s}_{t}),
\\
\beta_t(\boldsymbol{s}_t) &= \sum_{\boldsymbol{s}_{t+1}\in\mathscr{B}^{L}} \beta_{t+1}(\boldsymbol{s}_{t+1}) \gamma_{t+1}(\boldsymbol{s}_{t}, \boldsymbol{s}_{t+1}).
\end{aligned}
\end{equation}
The initial values are $\alpha_1(\boldsymbol{s}_{1,i}) = \boldsymbol{\pi}_1 (\boldsymbol{s}_{1,i}) b_{i,1,N_1+1}$ for $\boldsymbol{s}_{1,i} \in \mathscr{B}^L$ and
$\beta_T(\boldsymbol{s}_{T}) = \boldsymbol{\pi}_{T} (\boldsymbol{s}_{T})$ for $\boldsymbol{s}_T \in \mathscr{B}^L$, where $\boldsymbol{\pi}_1$ is given by Equation (\ref{eq:HMM_pi1}); and $\boldsymbol{\pi}_t (\boldsymbol{s}_{t+1,i}) = \mathbb{P} (\boldsymbol{S}_{t+1} = \boldsymbol{s}_{t+1,i})$ can be obtained by the following recursive equation,
\begin{equation}
\begin{aligned}
\boldsymbol{\pi}_t (\boldsymbol{s}_{t+1,i}) = \sum_{\boldsymbol{s}_{{t},j} \in \mathscr{B}^{L}} a_{t,j,i} \boldsymbol{\pi}_{t-1} (\boldsymbol{s}_{t,j}).
\end{aligned}
\end{equation}

\subsection{Joint Detection and Decoding}

We adopt joint detection and decoding based on turbo processing.
For ML and MAP decoding, the log-likelihood ratio ($LLR$) and log-aposterior ratio ($LAR$) are adopted as the input soft information to the soft channel decoder, respectively.
Let $LLR^{(v)}_{z_{k,i}}$ and $LAR^{(v)}_{z_{k,i}}$ denote the log-likelihood ratio and log-aposterior-ratio of $z_{k,i}$ after the $v$-th iteration, respectively.
Typically each iteration of the turbo processing consists of one ML/MAP symbol detection operation followed by $V$ channel decoding iterations.

For the ML-decoding, the initial $LLR$ values are obtained by Viterbi algorithm as follows,
\begin{equation}
\begin{aligned}
LLR^{(0)}_{z_{k,i}} = \log \frac{\mathbb{P} (\boldsymbol{N}_T | z_{k,i} = 1)}{\mathbb{P} (\boldsymbol{N}_T | z_{k,i} = 0)} = \sum_{t=(i-1)L+k}^{iL+k-1} \log \frac{ \mathbb{P} (N_t | z_{k,i} = 1)} {\mathbb{P} (N_t | z_{k,i} = 0)},
\end{aligned}
\end{equation}
and the $LLR$ of the $i$-th transmitted symbol in layer $k$ in the $v$-th iteration is calculated by
\begin{equation}
\begin{aligned}
\label{eq:LLR_ML}
LLR^{(v)}_{z_{k,i}} = \log \frac{\mathbb{P} (\bN_T | z_{k,i} = 1)}{\mathbb{P} (\bN_T | z_{k,i} = 0)} = \!\! \sum_{t=(i-1)L+k}^{iL+k-1} \!\! \log \frac{ 
\mathbb{E}_{z_{k,i}=1}\mathbb{P} (N_t | \boldsymbol{S}_t = \boldsymbol{s}_t) } 
{\mathbb{E}_{z_{k,i}=0} \mathbb{P} (N_t | \boldsymbol{S}_t = \boldsymbol{s}_t) },
\end{aligned}
\end{equation}
where
the expectation $\mathbb{E}_{z_{k,i} = \theta} [\bullet]$ for $\theta \in \{ 0, 1 \}$ is calculated based on a posterior probabilities by the $(v-1)$th iteration of channels in $\mathcal{L} \setminus k$ as follows
\begin{equation}
\begin{aligned}
\label{eq:expection_ML}
\mathbb{E}_{\boldsymbol{s}_t\in\mathcal{S}_{z_{k,i}=\theta}} \![\bullet] = \!\!\! \sum_{\boldsymbol{s}_t\in\mathcal{S}_{z_{k,i}=\theta}} \prod_{j \in \mathcal{L} \setminus k} \!\! {\mathbb{P}}^{1-z_{j,\lceil \frac{t-j+1}{L} \rceil}}_{(v-1)} (z_{j,\lceil \frac{t-j+1}{L} \rceil} = 0 | N_t) {\mathbb{P}}^{z_{j, \lceil \frac{t-j+1}{L} \rceil}}_{(v-1)} (z_{j,\lceil \frac{t-j+1}{L} \rceil} = 1 | N_t) [\bullet],
\end{aligned}
\end{equation}
where $\mathcal{S}_{z_{k,i}=\theta} = \{ \boldsymbol{s}_{t} | \boldsymbol{e}^T_k \cdot \boldsymbol{s}_{t} = \theta \}$; $z_{j,\lceil \frac{t-j+1}{L} \rceil} = \boldsymbol{e}^T_{j} \cdot \boldsymbol{s}_{t}$; and the a posterior probability of $z_{j,\lceil \frac{t-j+1}{L} \rceil}$ after the $(v-1)$-th iteration is given by
\begin{equation}
\begin{aligned}
{\mathbb{P}}_{(v-1)} (z_{j,\lceil \frac{t-j+1}{L} \rceil} = 0 | N_t) \!=\! 1 \!-\! {\mathbb{P}}_{(v-1)} (z_{j,\lceil \frac{t-j+1}{L} \rceil} = 1 | N_t) \!=\! \frac{1}{1 + \exp \Big( \frac{q_{j,\lceil \frac{t-j+1}{L} \rceil}}{1-q_{j,\lceil \frac{t-j+1}{L} \rceil}}
LLR^{(v-1)}_{j,\lceil \frac{t-j+1}{L} \rceil}
\Big) }.
\end{aligned}
\end{equation}

For MAP-decoding, the initial $LAR$ is determined by BLJR detection as follows
\begin{equation}
\begin{aligned}
LAR^{(0)}_{z_{k,i}} = \log \frac{\mathbb{P} (z_{k,i} = 1 | \bN_T)}{\mathbb{P} (z_{k,i} = 0 | \bN_T)};
\end{aligned}
\end{equation}
and the $LAR$ of symbol $z_{k,i}$ from the $v$-th iteration is given by
\begin{equation}
\begin{aligned}
LAR^{(v)}_{z_{k,i}}
&= \log \frac{\mathbb{P}_{(v-1)} (z_{j,\lceil \frac{t-j+1}{L} \rceil} = 1 | \bN_T)}{\mathbb{P}_{(v-1)} (z_{j,\lceil \frac{t-j+1}{L} \rceil} = 0 | \bN_T)}
\\
&= \log \frac{\mathbb{P} (\bN_T | z_{k,i} = 1)}{\mathbb{P} (\bN_T | z_{k,i} = 0)} + \log \frac{ \mathbb{P} (z_{k,i} = 1) }{\mathbb{P} (z_{k,i} = 0)}
\\
&= LLR^{(v-1)}_{z_{k,i}} + \log \frac{ q_{k,i} }{1 - q_{k,i}},
\end{aligned}
\end{equation}
where $LLR^{(v-1)}_{z_{k,i}}$ is computed according to Equations (\ref{eq:LLR_ML}) and (\ref{eq:expection_ML}).
Furthermore, the a posterior probability of MAP-decoding is given by
\begin{equation}
\begin{aligned}
{\mathbb{P}}_{(v-1)} (z_{j,\lceil \frac{t-j+1}{L} \rceil} = 0 | N_t) \!=\! 1 \!-\! {\mathbb{P}}_{(v-1)} (z_{j,\lceil \frac{t-j+1}{L} \rceil} = 1 | N_t) \!=\! \frac{1}{1 + \exp \Big( LAR^{(v-1)}_{j,\lceil \frac{t-j+1}{L} \rceil} \Big)}.
\end{aligned}
\end{equation}

\section{Numerical and Sumulation Results}

In this section, we provide numerical and simulation results on the achievable rates, power allocation, channel estimation as well as joint detection and decoding.

\subsection{Achievable Rates}

Consider the superimposed transmission with $L=2$ signal layers, where $\lambda_1 = \lambda_2 = 10$ and background radiation $\lambda_0 = 0.01$.
We evaluate the sum achievable transmission rate versus symbol number $M$ and relative delay $\rho_1$ in Figure \ref{fig:sum_rate_vs_T},
\begin{figure}
\centering
	\includegraphics[width=0.7\textwidth]{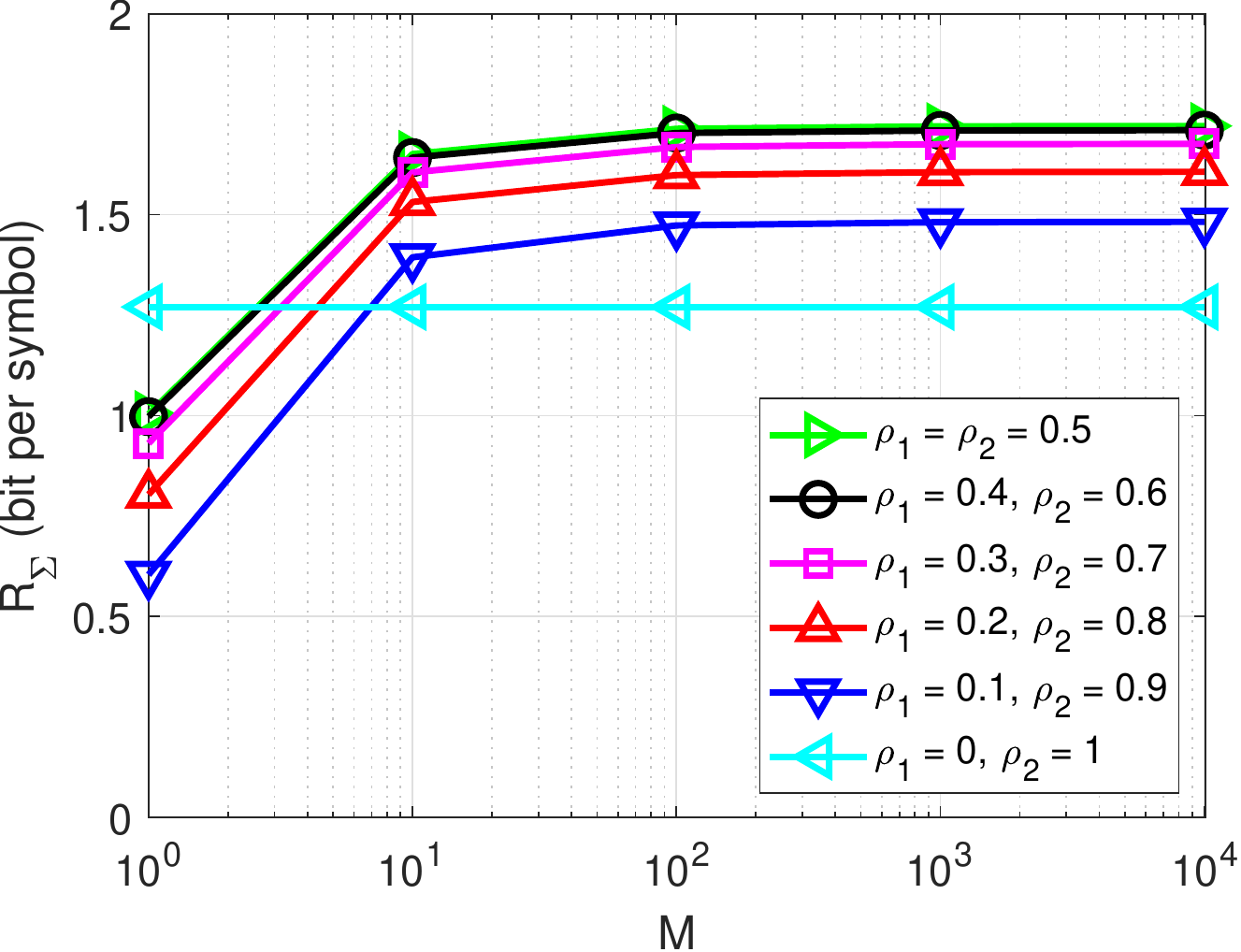}
    \caption{\label{fig:sum_rate_vs_T} The achievable sum rates with different relative delays.}
\end{figure}
where the scenario of $\rho_1 = 0, \rho_2 = 1$ for perfect symbol boundary alignment is also shown for comparison.
It is implied that introducing relative relays can enhance the achievable sum rate, and $\rho_1 = 0.5$ can maximize the sum rate, which can converge for $M$ exceeding $10^2$, where an improvement of $0.5$ bit per symbol can be observed.


Consider a more general scenario with possibly more than $2$ signal layers, i.e., $M = 1\times10^4$, $\lambda_i = \lambda_j = \lambda$ for $1 \leq i < j \leq L$ and background radiation $\lambda_0 = 0.01$.
The achievable sum rates for the case of $L = 2,3,4$ with the relative delays are shown in Figure \ref{fig:sum_rate_vs_Channel}, where the scenario of $L = 1$ without signal superposition is also shown for comparison.
\begin{figure}
\centering
	\includegraphics[width=0.7\textwidth]{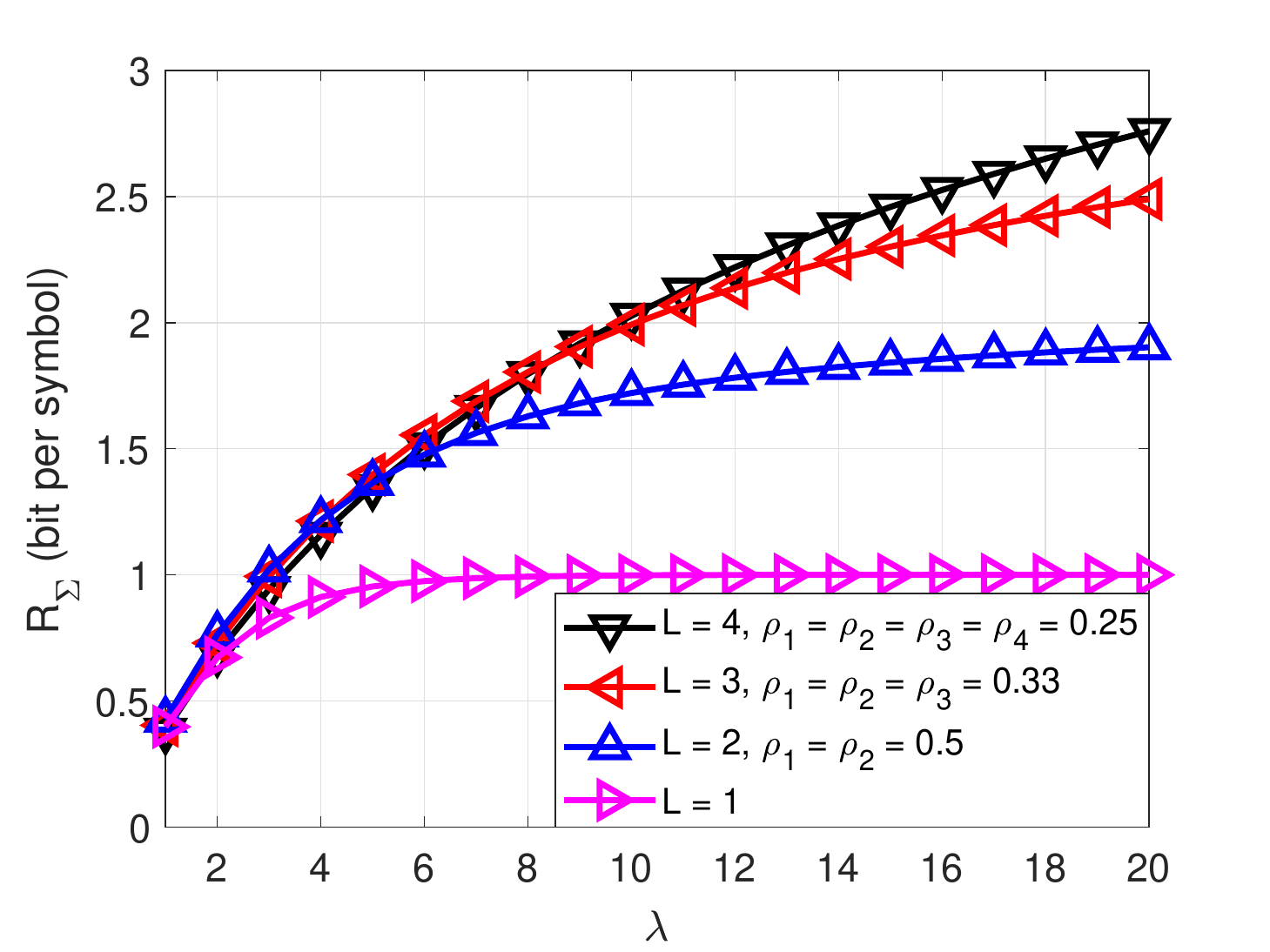}
    \caption{\label{fig:sum_rate_vs_Channel} The achievable sum rates of 2,3,4 signal layers.}
\end{figure}
It is seen that the achievable sum rate can be improved with sufficient receiver-side signal intensity.
Since the computational complexity of symbol detection grows exponentially with $L$, we can set a standard on the minimum $L$ subject to at least $\sigma$ bit per symbol gain over $L-1$ signal layers.
Accordingly, we can achieve the optimal number of signal layers $L^{*}$ corresponding to different $\lambda$.
For example $\sigma = 0.2$; when $\lambda < 3$, $L = 1$ is optimal; when $3 \leq \lambda \leq 8$, $L = 2$; when $8 \leq \lambda \leq 18$, $L = 3$; and when $\lambda > 18$, $L = 4$.

\subsection{Power Allocation}

We consider the power allocation in Section III.D for $L = 2$.
The first optimization problem is $\max \big(R^{*}_1 + R^{*}_2\big)$, subject to $\lambda_1 + \lambda_2 = \lambda_s$.
Figure \ref{fig:Power_alloc} plots the maximum achievable sum rates and their optimal power allocation versus $\lambda_s$.
It is seen that as $\lambda_s$ increases, the optimal power allocation tends to become equal distribution, where the achievable sum rate enhances as $\rho$ grows from $0.1$ to $0.5$.
The second optimization problem is to $\max R^*_{1}$, subjected to $R^*_{2} \geq R_{2inf}$ and $\lambda_1 + \lambda_2 = \lambda_s$, as shown in Figure \ref{fig:R_lambda_2u}.
Let $P$ denote the intersection of lines $R = R_{2inf}$ and $\lambda_1 + \lambda_2 = \lambda_s$; and $\lambda_{P}$ denote the $x$-coordinates of $P$.
The feasible solution for the problem is that $\lambda_1 = \lambda_s - \lambda_{P}, \lambda_2 = \lambda_{P}$.

\begin{figure}
\centering
	\includegraphics[width=0.7\textwidth]{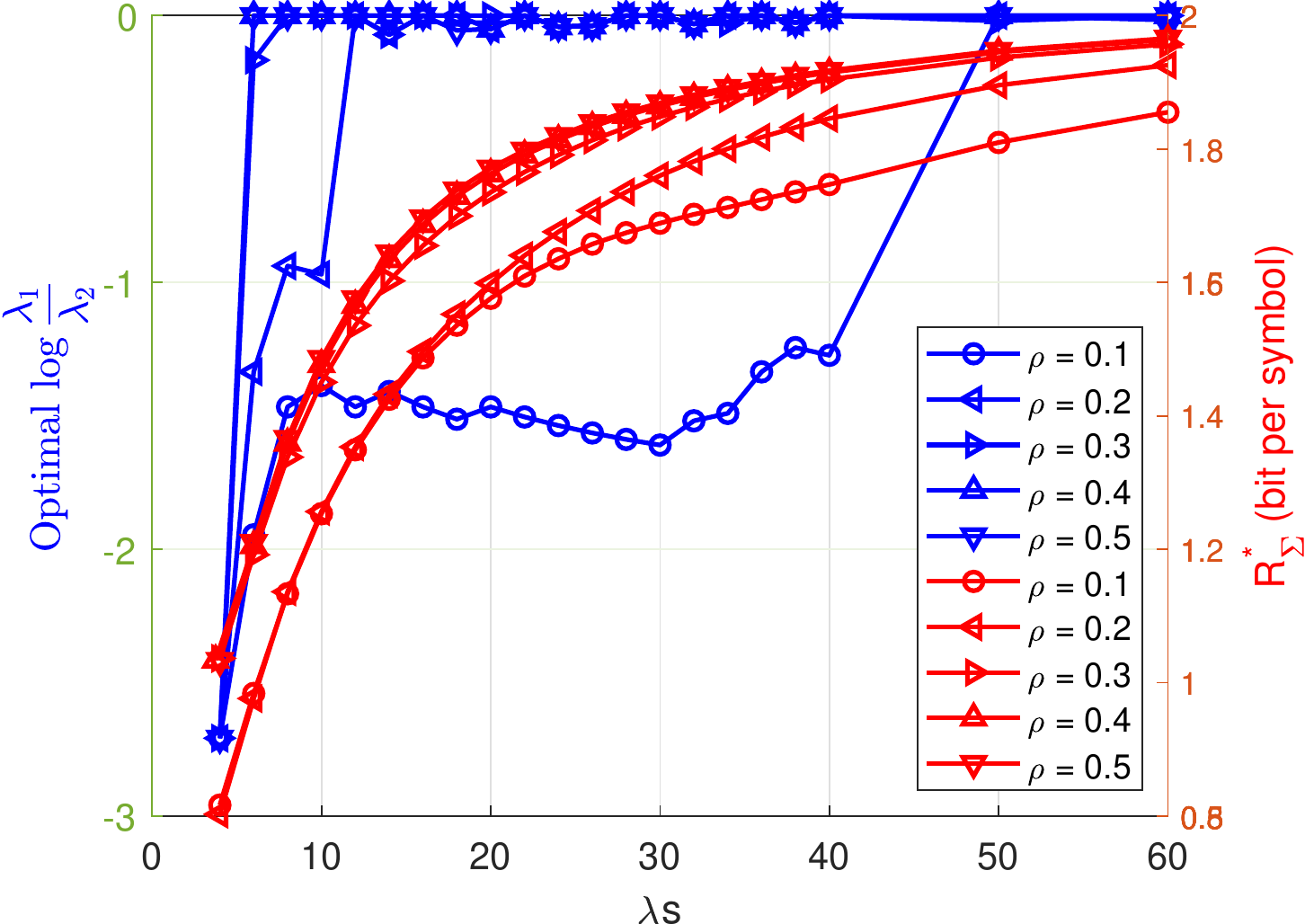}
    \caption{\label{fig:Power_alloc} The optimal power allocation and achievable sum rate versus $\lambda_s$.}
\end{figure}

\begin{figure}
\centering
	\includegraphics[width=0.7\textwidth]{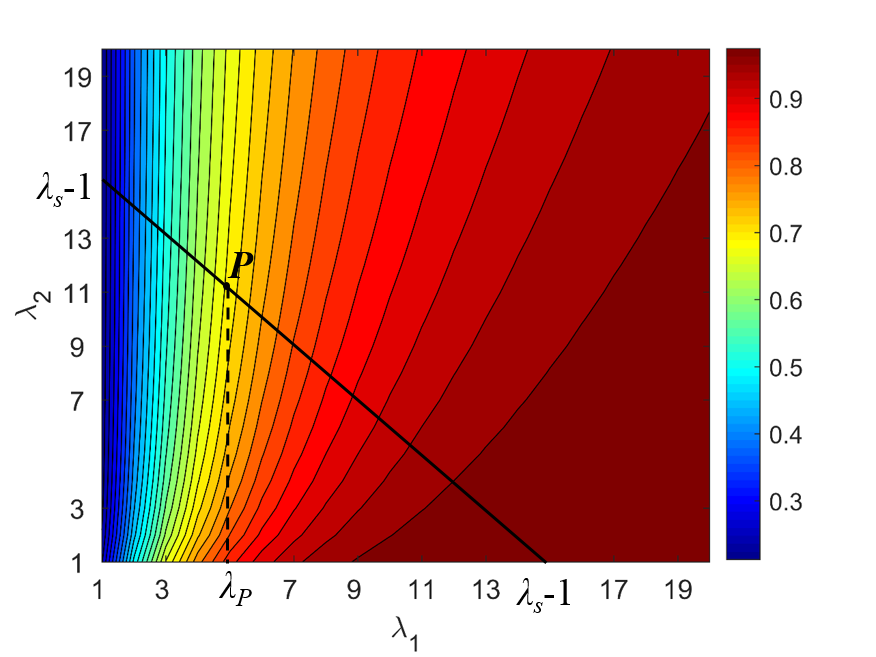}
    \caption{\label{fig:R_lambda_2u} The achievable transmission rate of a single layer.}
\end{figure}


\subsection{Joint Detection and Decoding}

Assume that $\lambda_i = \lambda_{ave}$ for $1 \leq i \leq L$.
The average symbol error rates for $L = 2$ and $L = 3$ of joint detection versus $\lambda_{ave}$ are illustrated in Figures \ref{fig:viterbi_BCJR_ber_2L} and \ref{fig:viterbi_BCJR_ber_3L}, respectively.
Furthermore, we adopt a $(12620, 6310)$ LDPC code for each signal layer, where the parity check matrix construction and low-complexity message pass decoding follow \cite{yang2004design, fossorier2004quasicyclic} and \cite{chen2005reduced}.
The average bit error rates for $L = 2$ and $L = 3$ by joint detection and decoding versus $\lambda_{ave}$ are shown in Figures \ref{fig:viterbi_bcjr_ldpc_ber_2L} and \ref{fig:viterbi_bcjr_ldpc_ber_3L}, respectively.
It is seen that for $L = 2$, $\rho = 0.5$ has the lowest error rate for both detection and decoding, which accords with the maximum achievable sum rate.

\begin{figure}
\subfigure[The number of signal layer $L$=2.]{
  \begin{minipage}[t]{0.48\linewidth}
    \centering
	\includegraphics[width=1.0\textwidth]{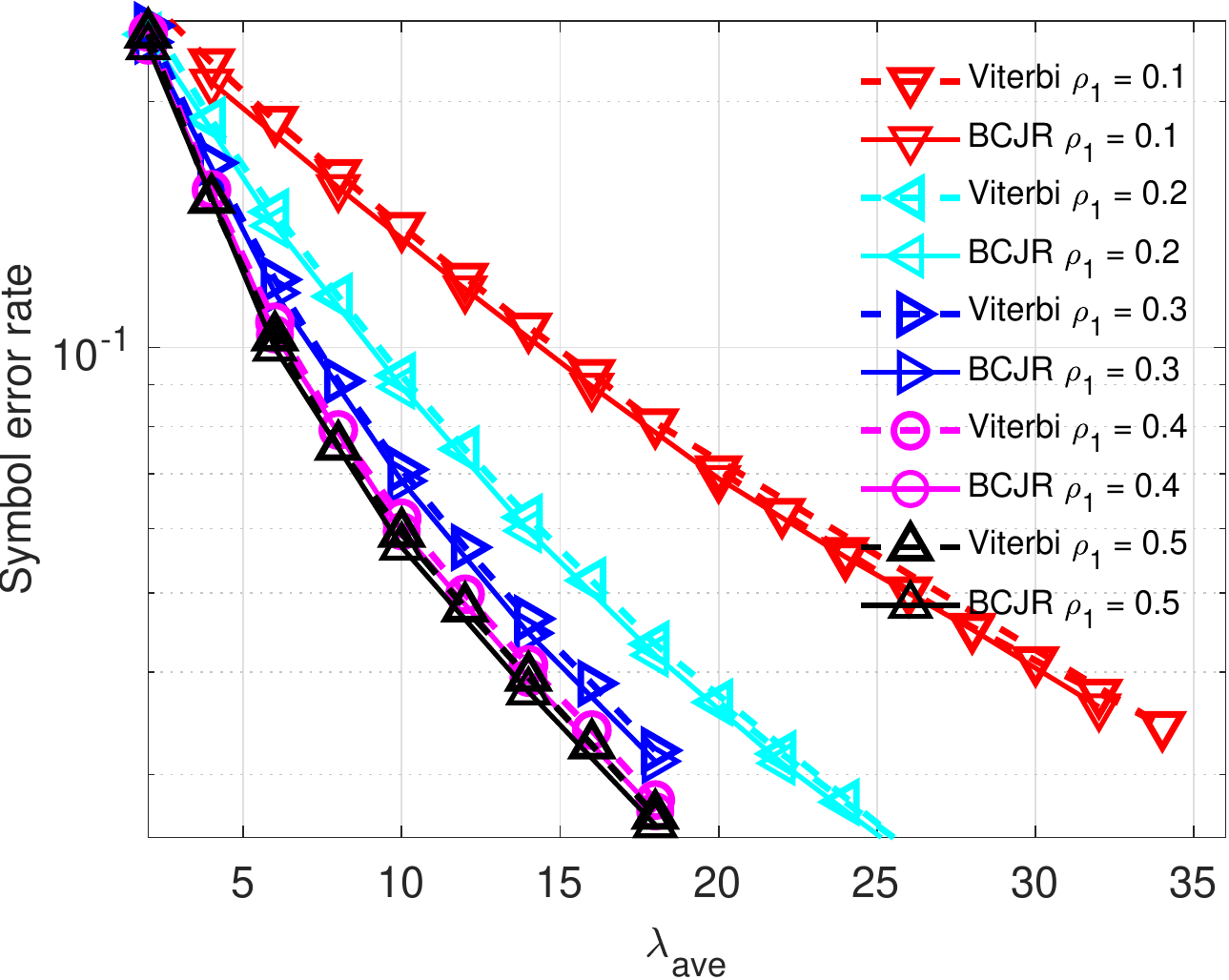}
    \label{fig:viterbi_BCJR_ber_2L}
  \end{minipage}%
}
\subfigure[The number of signal layer $L$=3.]{
  \begin{minipage}[t]{0.48\linewidth}
    \centering
    \includegraphics[width=1.0\textwidth]{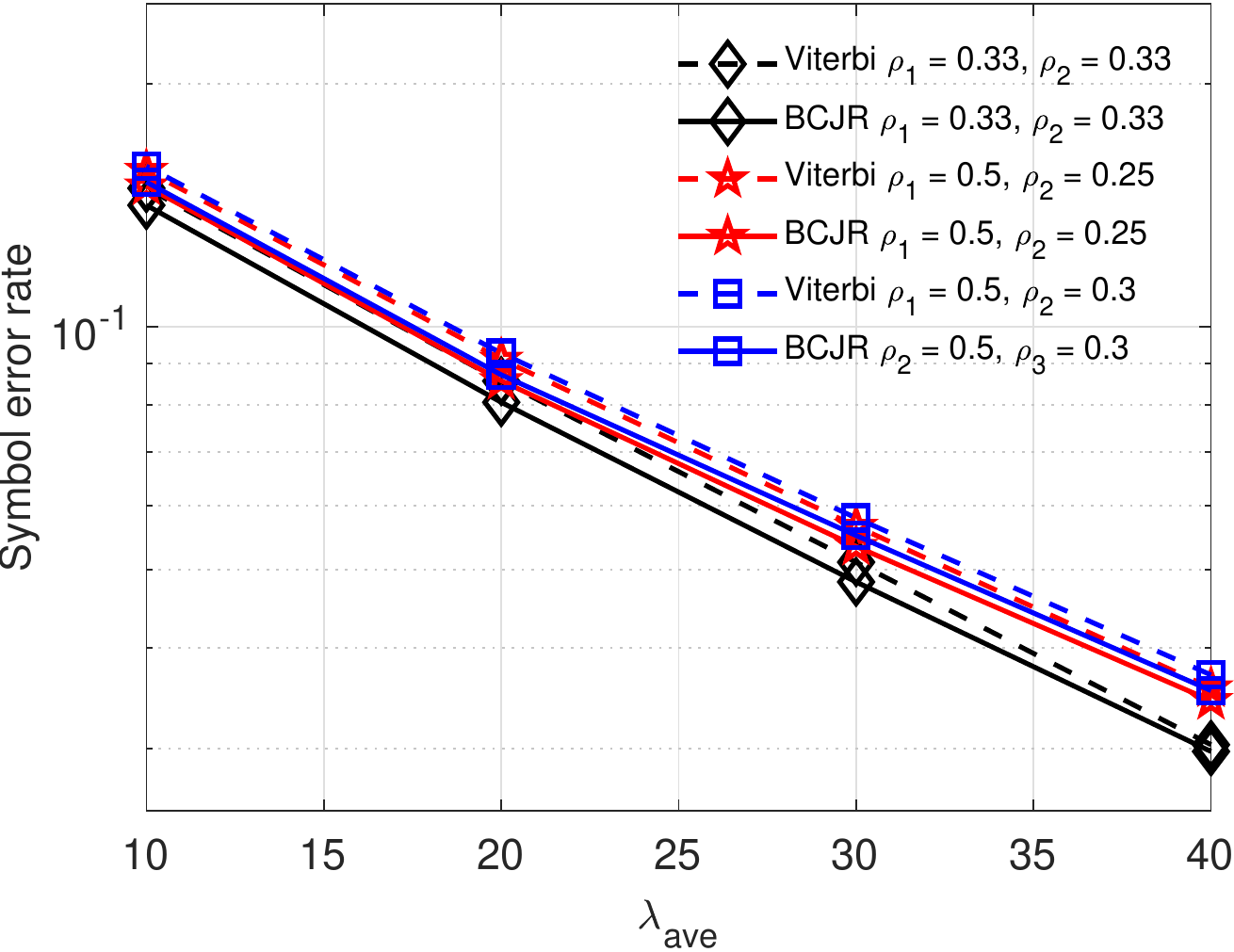}
    \label{fig:viterbi_BCJR_ber_3L}
  \end{minipage}
}
\caption{The symbol error rate of joint detection.}
\end{figure}

\begin{figure}
\subfigure[The number of signal layer $L$=2.]{
  \begin{minipage}[t]{0.48\linewidth}
    \centering
	\includegraphics[width=1.0\textwidth]{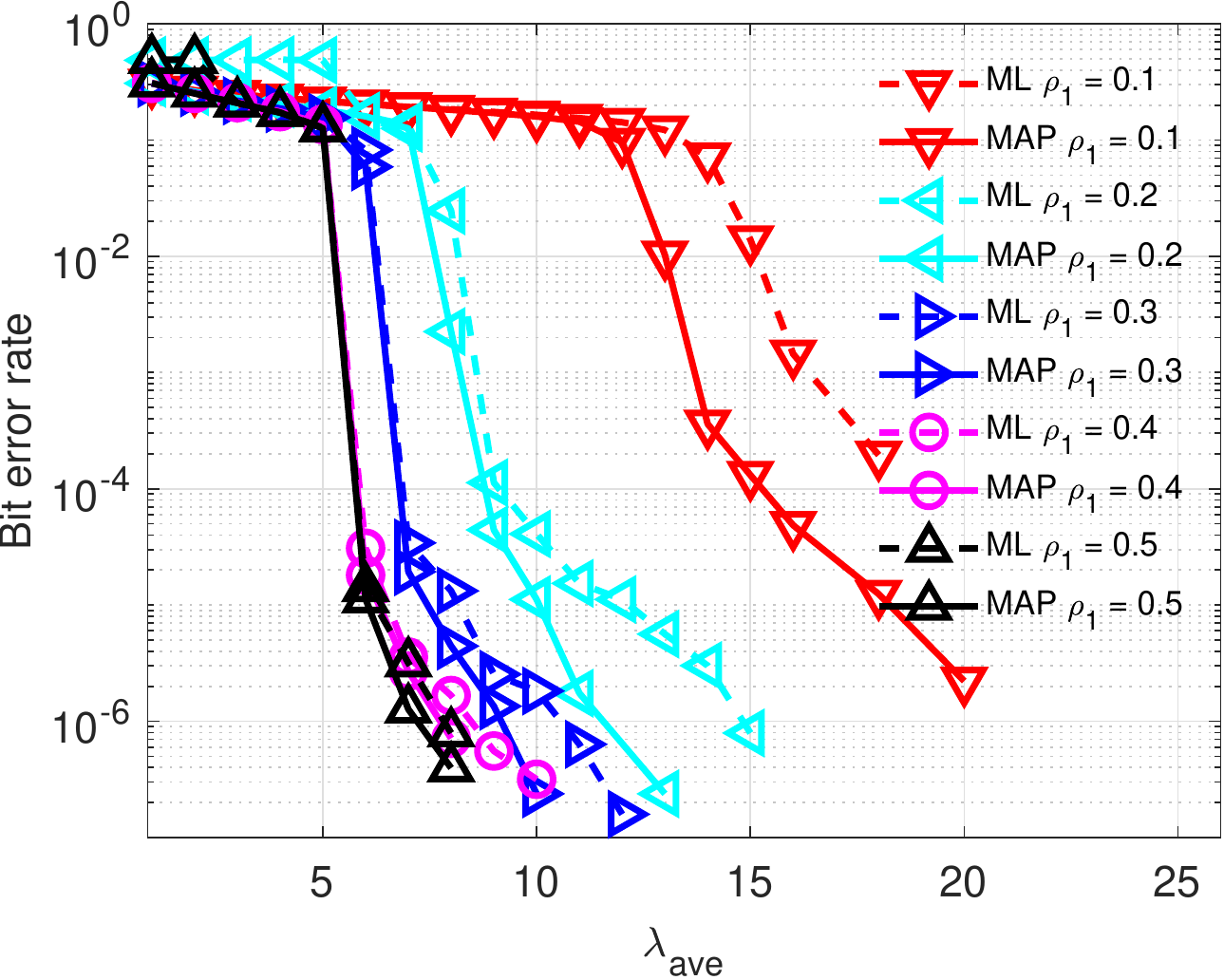}
    \label{fig:viterbi_bcjr_ldpc_ber_2L}
  \end{minipage}%
}
\subfigure[The number of signal layer $L$=3.]{
  \begin{minipage}[t]{0.48\linewidth}
    \centering
    \includegraphics[width=1.0\textwidth]{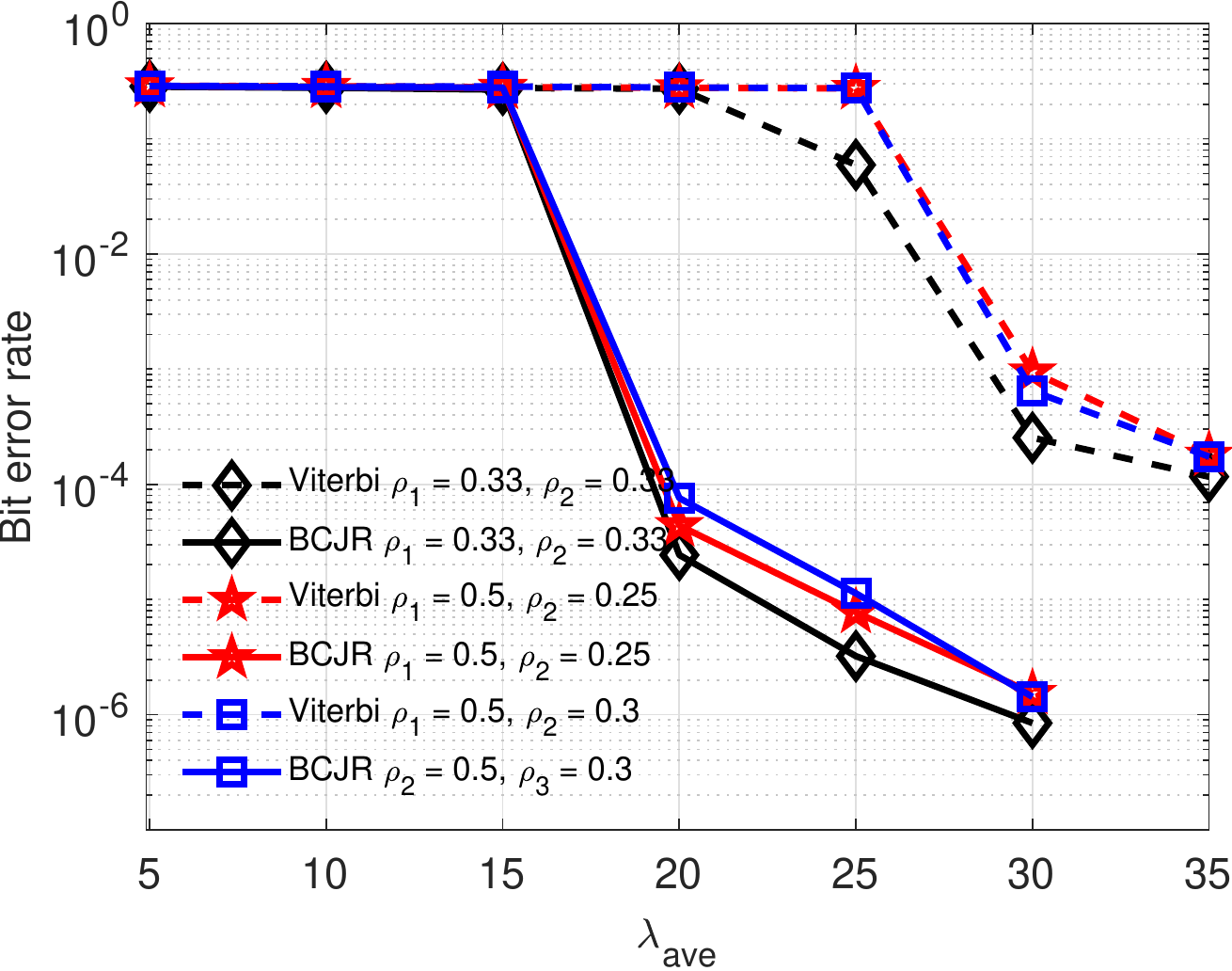}
    \label{fig:viterbi_bcjr_ldpc_ber_3L}
  \end{minipage}
}
\caption{The bit error rate of joint detection and decoding with $(12620, 6310)$ LDPC code.}
\end{figure}


\section{Experimental Results for $2$-layer-superimposed Transmission}

We conduct offline experiments on the $2$-layer-superposition transmission for optical wireless scattering communication to experimentally evaluate the proposed joint detection and decoding.
At the transmitter side, a waveform generator is adopted to produce OOK signals.
A Bias-Tee is employed to combine the AC and DC signals to drive the UV LED.
At the receiver side, a photomultiplier tube (PMT) is employed as the photon-detector, which is integrated with an optical filter in a sealed box.
The UV signal of wavelength around $280$nm can be detected, while the background radiation of other wavelengths is blocked.
The PMT output signal is attenuated by an attenuator, amplified by an amplifier, and then filtered by a low-pass filter, which is then sampled by the oscilloscope.
Finally, the photon counting processing, HMM-based MAP joint detection and decoding are realized in the received-side personal computer (PC) based on the sampled waveforms from the oscilloscope.
Table \ref{tb:experimental_parameters} shows the specification of experimental equipment, and Figures \ref{fig:ex_chart} and \ref{fig:TX-RX} illustrate the entire experimental block diagram and the test bed realizations, respectively.


\begin{figure}
\centering
	\includegraphics[width=0.8\textwidth]{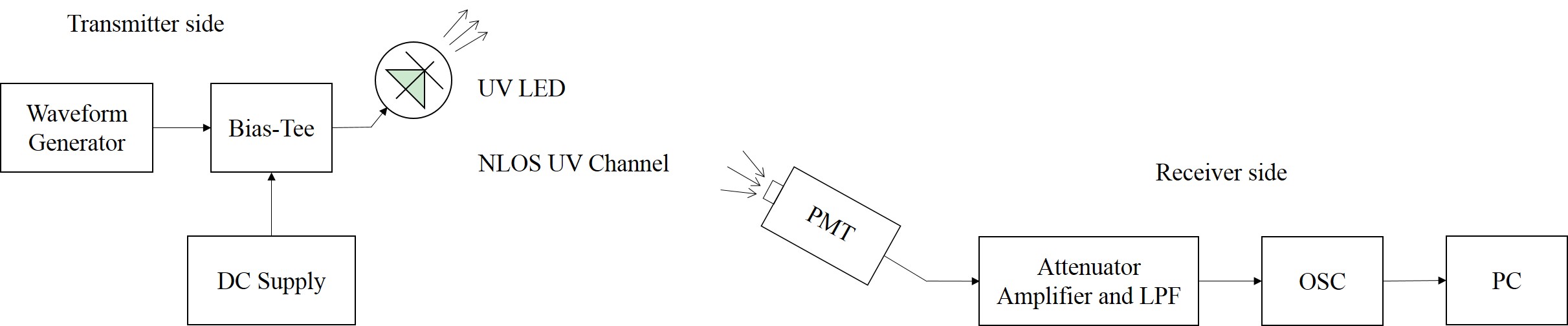}
    \caption{\label{fig:ex_chart} Diagram of the experimental superimposed communication system.}
\end{figure}

\begin{figure}
\centering
	\includegraphics[width=0.8\textwidth]{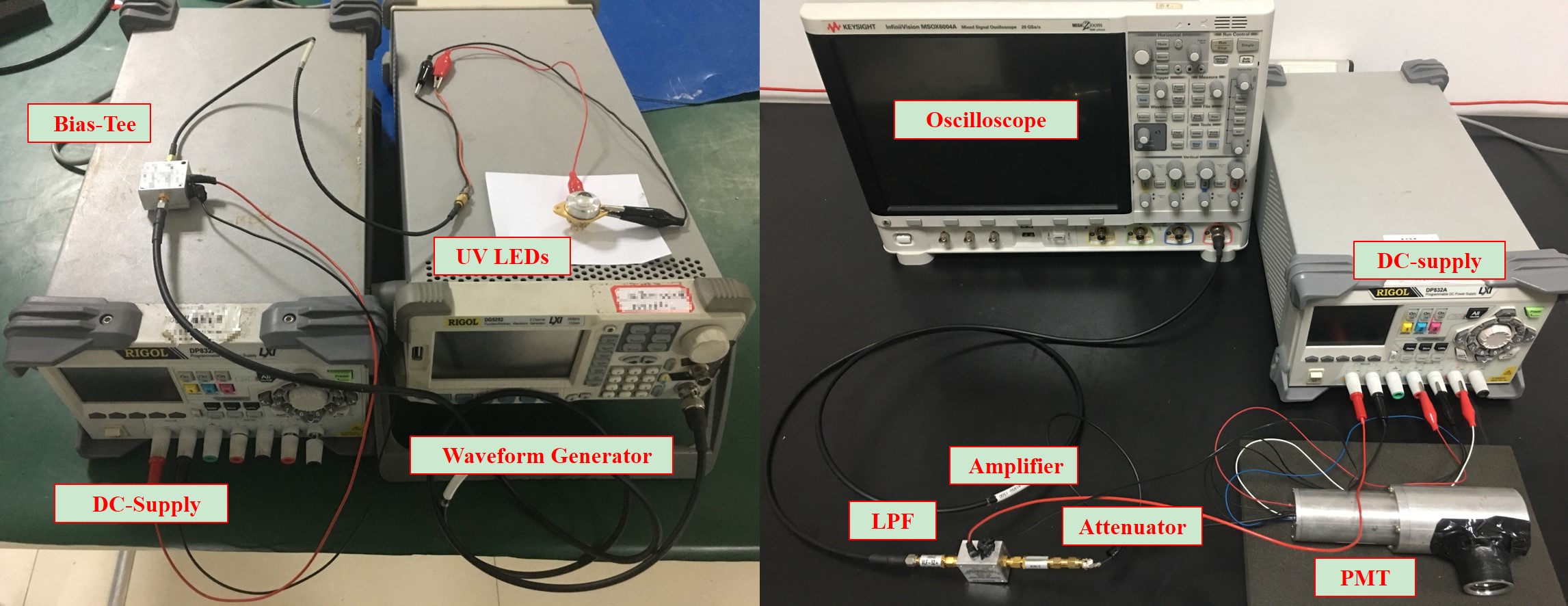}
    \caption{\label{fig:TX-RX} Demonstration of the transmitter-side (left) and receiver-side (right) test beds.}
\end{figure}

\begin{table}[tbp]
\caption{Specification of device for experiment.}
\label{tb:experimental_parameters}
\centering
\begin{tabular}{|c|c|c|}  
\hline
\multirow{2}{*}{UV LED}
&Model &TO-3zz PO\#2036 \\
\cline{2-3}
&Wavelength &$280$nm \\
\hline
\multirow{2}{*}{Optical filter}
&Peak transmission &$28.2$\% \\
\cline{2-3}
&Aperture size &$\Phi 31.5$mm $\times 28.3$mm \\
\hline
\multirow{4}{*}{PMT}
&Model &R7154 \\
\cline{2-3}
&Spectral response &$160$nm $\sim$ $320$nm \\
\cline{2-3}
&Dark counts &$< 10$ per second \\
\cline{2-3}
&Detection bandwidth &$> 200$MHz \\
\hline
\end{tabular}
\end{table}

In the experiment, the background radiation intensity is around $150$ photoelectrons per second in the indoor environment $(\lambda_0 \approx 1.5 \times 10^{-4})$.
Furthermore, we adopt the following parameters for two signal layers: symbol duration $T_s = 1 \mu s$; $\rho = 0.5$; uniform power allocation for $2$ signal layers $(\lambda_1 = \lambda_2 = \lambda_{ave})$; the same parity check matrix construction and decoding algorithm of LDPC codes as those in simulation; and the uniform prior probabilities for $0-1$ symbols.
For each $\lambda_{ave}$, we implement MAP joint detection and decoding and count the bit error rate based on the transmission of $1000$ frames $(1.262 \times 10^7$ random bits$)$.


We experimentally evaluate channel estimation for $L = 2, \rho_1 = 0.5$, where we exploit a $255$-bit m sequence as a pilot sequence $\boldsymbol{z}^p$.
For $L_p = 1$, $\boldsymbol{z}^p_1 = \boldsymbol{z}^p$; and for $L_p = 2$, $\boldsymbol{z}^p_1 = \boldsymbol{z}^p_2 = \boldsymbol{z}^p$.
The performance of channel estimation versus the number of iterations is illustrated in Figure \ref{fig:EX_estimation255}, where the result of $L_p = 2$ is from the ML estimation.
It is implied that real time estimation for both $L_p = 0$ and $1$ can converge to the ML solution; and assisted by the pilot sequence, the convergency of $L_p = 1$ is faster than $L_p = 0$.
Furthermore, higher $\lambda_{\boldsymbol{s}_i}$ with large receiver-side SNR can lead to faster convergence, which is close to the simulation result on the channel estimation with the same system parameters, as shown in Figure \ref{fig:SIMU_estimation255}.

\begin{figure}
\centering
\includegraphics[width=0.7\textwidth]{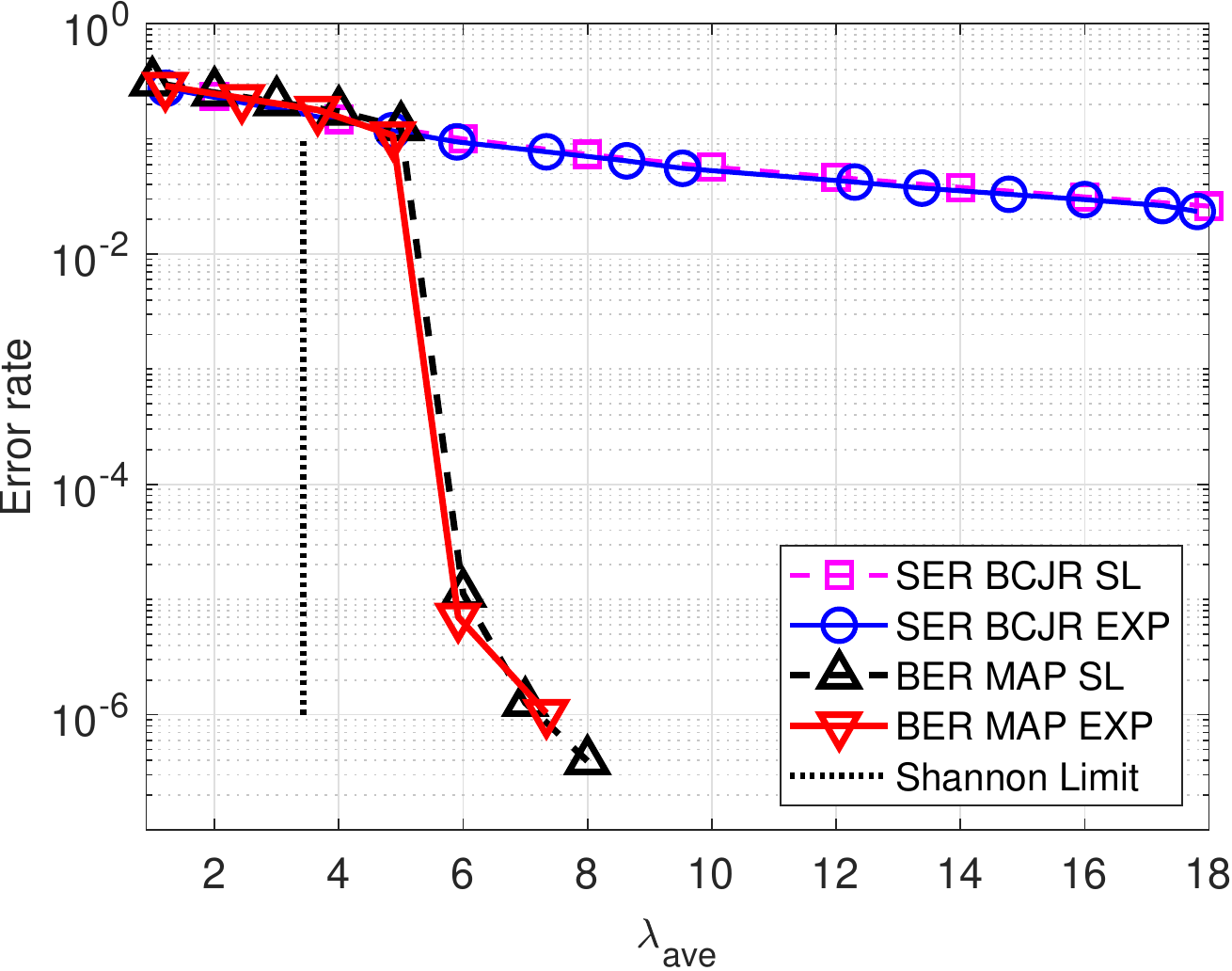}
    \caption{\label{fig:EX_detection} The average symbol and bit error rate of $2$-layer superimposed communication from simulation and experimental measurements.}
\end{figure}

\begin{figure}
\subfigure[Convergence of channel estimation from experiments.]{
  \begin{minipage}[t]{0.48\linewidth}
    \centering
    \includegraphics[width=1.0\textwidth]{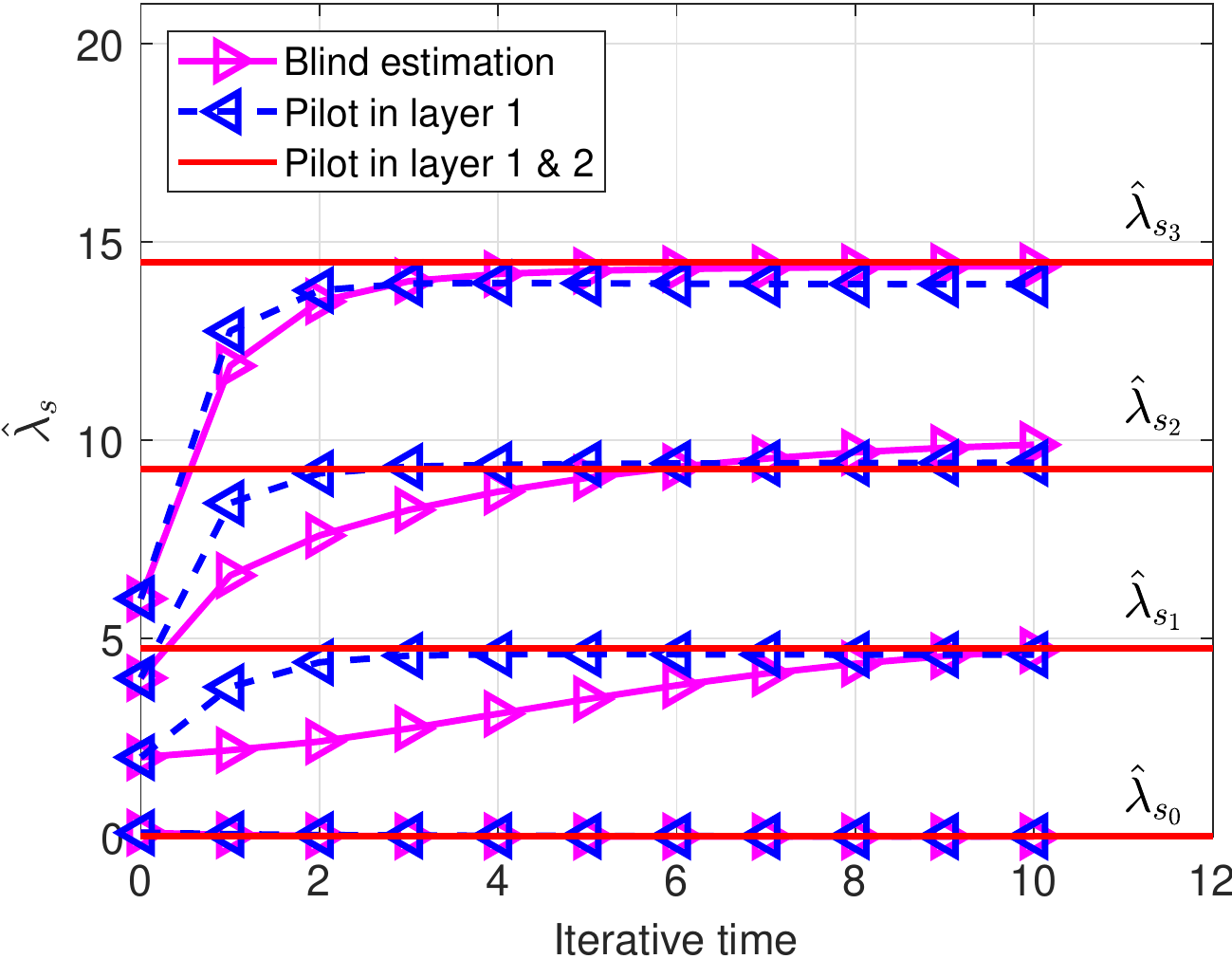}
    \label{fig:EX_estimation255}
  \end{minipage}
}
\subfigure[Convergence of channel estimation from simulations.]{
  \begin{minipage}[t]{0.48\linewidth}
    \centering
	\includegraphics[width=1.0\textwidth]{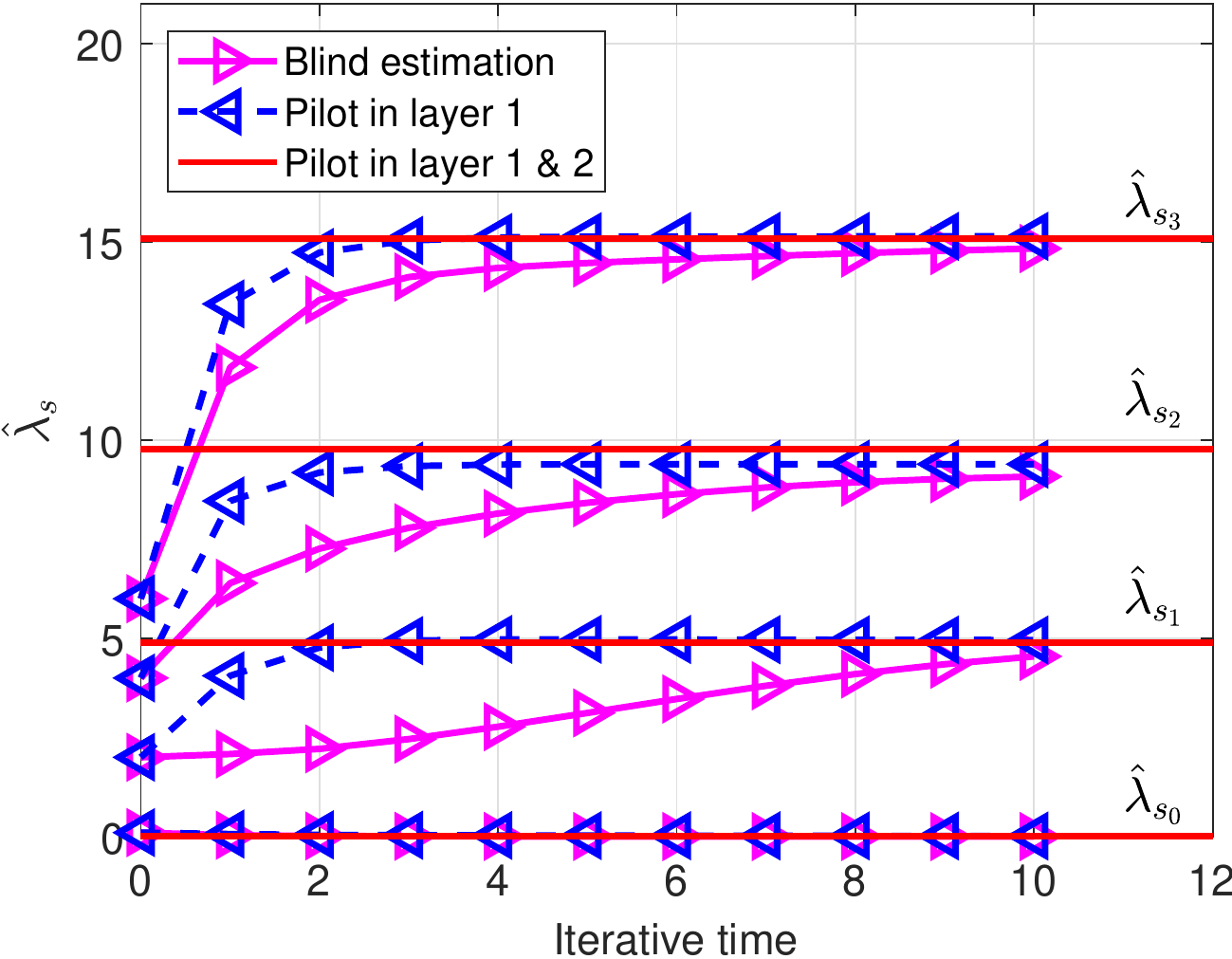}
    \label{fig:SIMU_estimation255}
  \end{minipage}%
  }
\caption{Convergence of channel estimation from both experiments and simulations.}
\end{figure}

Moreover, the MAP detection with and without LDPC code (denoted as EXP) is evaluated in Figure \ref{fig:EX_detection}, where the simulation results with the same channel parameters (denoted as SL) is plotted for comparison.
It is seen that the experimental results on the channel estimation, symbol detection and joint detection/decoding are close to the simulation results, which illustrates the feasibility of the proposed channel estimation and signal detection approaches in real communication scenarios.


\section{Conclusion}

We have proposed superposition transmission for optical wireless scattering communication based on HMM.
We have obtained the achievable rates of proposed superposition transmission, and proposed the EM-based channel estimation and joint detection and decoding.
The performance of the proposed approaches are verified by numerical results.
Moreover, for two- and three-layer transmission, both simulation and experimental results are employed to validate the feasibility of the proposed algorithms for channel estimation as well as joint detection and decoding.

\section{Appendix}

\textcolor{red}{
\subsection{Comparison of Achievable Rates between OOK and 2-Pulse-Position Modulations (2-PPM)}
}

\textcolor{red}{
The mutual information of single-use OOK modulation is given by
\begin{equation}
\begin{aligned}
\mathrm{I}_{OOK}(X; N) = \max\limits_{0<q<1} \Bigg\{ \mathcal{H}\bigg[ \sum_{X_q \in \{ 0,1\}}{\mathbb{P}(X_q) \mathbb{P}_{OOK}(N | X_q)} \bigg] -  \sum_{X_q \in \{ 0,1\}}{\mathbb{P}(X_q) \mathcal{H} \big[\mathbb{P}_{OOK}(N | X_q) \big]} \Bigg\},
\end{aligned}
\end{equation}
and that of 2-PPM is given by
\begin{equation}
\begin{aligned}
\mathrm{I}_{2-PPM}(X; N_1, N_2) = \max\limits_{0<q<1} \max\limits_{0<\tau<1} \Bigg\{ \mathcal{H}\bigg[ \sum_{X_q \in \{ 0,1\}}{\mathbb{P}(X_q) \mathbb{P}_{2-PPM}(N_1,N_2 | X_q, \tau) \bigg] } -  \sum_{X_q \in \{ 0,1\}}{\mathbb{P}(X_q) \mathcal{H} \big[ \mathbb{P}_{2-PPM}(N_1,N_2 | X_q, \tau) \big] } \Bigg\},
\end{aligned}
\end{equation}
where $X_q \sim \{ q, 1-q \}$;
\begin{equation}
\begin{aligned}
\mathbb{P}_{OOK}(N | X_q = 0) &= \frac{\lambda_0^N}{N !} e^{-\lambda_0},
\\
\mathbb{P}_{OOK}(N | X_q = 1) &= \frac{(\lambda_0 + \lambda_1)^N}{N !} e^{-(\lambda_0 + \lambda_1)},
\\
\mathbb{P}_{2-PPM}(N_1, N_2 | X_q = 0, \tau) &= \frac{\tau^{N_1} \lambda_0^{N_1} (1-\tau)^{N_2} (\lambda_0 + \lambda_1)^{N_2}}{N_1 ! N_2 !} e^{-\tau\lambda_0 - (1-\tau)(\lambda_0 + \lambda_1)},
\\
\mathbb{P}_{2-PPM}(N_1, N_2 | X_q = 1, \tau) &= \frac{\tau^{N_1} (\lambda_0 + \lambda_1)^{N_1} (1-\tau)^{N_2} \lambda_0^{N_2}}{N_1 ! N_2 !} e^{-\tau(\lambda_0 + \lambda_1) - (1-\tau)\lambda_0};
\end{aligned}
\end{equation}
$\lambda_1$ denotes the mean number of detected photoelectrons in each symbol duration; $N, N_1, N_2$ denote the number of received photoelectrons; and $\tau$ denotes the duty ratio of the pulse in each symbol duration for 2-PPM.
The achievable rates of OOK and 2-PPM modulation are compared in Figure \ref{fig:OOK_vs_PPM}, where OOK modulation shows higher achievable rate.
\begin{figure}
\centering
	\includegraphics[width=0.7\textwidth]{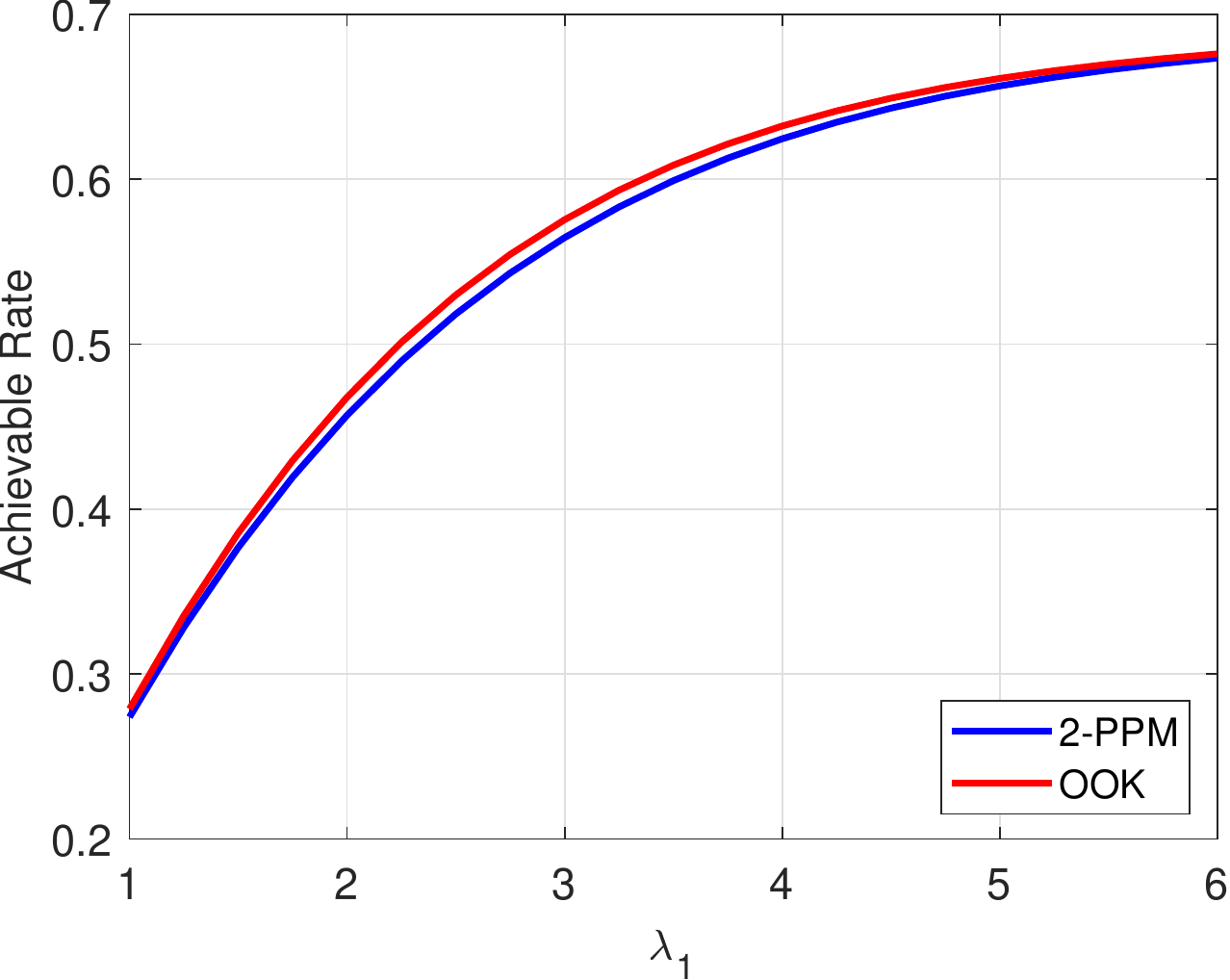}
    \caption{\label{fig:OOK_vs_PPM} The comparison between OOK and 2-PPM modulations with background intensity $1 \times 10^4$ per second.}
\end{figure}
}

\textcolor{red}{
\subsection{Proof of State Transition Matrix}
}

\textcolor{red}{
For $k = (t \text{ mod }L) + 1$, we have $\lceil \frac{t+1-k+1}{L} \rceil = \lceil \frac{t-k+1}{L} \rceil + 1$, and $\lceil \frac{t+1-r+1}{L} \rceil = \lceil \frac{t-r+1}{L} \rceil$ for $r \neq k$.
Due to $\boldsymbol{S}_t = [z_{1, \lceil \frac{t}{L} \rceil}, z_{2, \lceil \frac{t-1}{L} \rceil}, \dots, z_{L, \lceil \frac{t-L+1}{L} \rceil}]^T$, the $r$-th element of $\boldsymbol{S}_t$ and $\boldsymbol{S}_{t+1}$ must satisfy
$z_{r, \lceil \frac{t-r+1}{L} \rceil} = z_{r, \lceil \frac{t+1-r+1}{L} \rceil}$ for $r \neq k$.
Consequently, The state transition probability $\mathbb{P} (\boldsymbol{S}_{t+1} = \boldsymbol{s}_{t+1,j} | \boldsymbol{S}_t = \boldsymbol{s}_{t,i}) = 0$, if $z_{r, \lceil \frac{t-r+1}{L} \rceil} \neq z_{r, \lceil \frac{t+1-r+1}{L} \rceil}$; and $\mathbb{P} (\boldsymbol{S}_{t+1} = \boldsymbol{s}_{t+1,j} | \boldsymbol{S}_t = \boldsymbol{s}_{t,i}) = \mathbb{P} (z_{k, \lceil \frac{t+1-k+1}{L} \rceil} | z_{k, \lceil \frac{t-k+1}{L} \rceil})$.
Furthermore, $z_{k, \lceil \frac{t+1-k+1}{L} \rceil}$ is independent with $z_{k, \lceil \frac{t-k+1}{L} \rceil}$, hence we have $\mathbb{P} (z_{k, \lceil \frac{t+1-k+1}{L} \rceil} | z_{k, \lceil \frac{t-k+1}{L} \rceil}) = q_{k, \lceil \frac{t+1-k+1}{L} \rceil}^{ z_{k, \lceil \frac{t+1-k+1}{L} \rceil}} (1-q_{k, \lceil \frac{t+1-k+1}{L} \rceil})^{z_{k, \lceil \frac{t+1-k+1}{L} \rceil}}$.
In addition, $z_{i, \lceil \frac{t-i+1}{L} \rceil} = \boldsymbol{S}_{t} \cdot \boldsymbol{e}_i$ for $1 \leq i \leq L$.
Therefore, $z_{r, \lceil \frac{t-r+1}{L} \rceil} \neq z_{r, \lceil \frac{t+1-r+1}{L} \rceil}$ is equivalent with $\big( \boldsymbol{s}_{t+1,j} \cdot \boldsymbol{e}_r \big) \odot \big( \boldsymbol{s}_{t,i} \cdot \boldsymbol{e}_r \big) = 0$, and $\mathbb{P} (\boldsymbol{S}_{t+1} = \boldsymbol{s}_{t+1,j} | \boldsymbol{S}_t = \boldsymbol{s}_{t,i})$ can be simplified into Equation (\ref{eq:atij}).}

\subsection{Proof of Chain Rules on Conditional Probabilities}

We prove the proposition based on the following chain rule on the probability of received signal given two users since the numbers of received photoelectrons in different chips are independent of each other,
\begin{equation}
\begin{aligned}
\mathbb{P} (\boldsymbol{N}_T | \boldsymbol{Z}_{\mathcal{L}}) &= \prod^{T}_{t=1} \mathbb{P} (N_{t} | Z_{1,\lceil \frac{t}{L}\rceil}, Z_{2,\lceil \frac{t-1}{L}\rceil}, \dots, Z_{L,\lceil \frac{t-L+1}{L}\rceil} ).
\end{aligned}
\end{equation}
Consequently, Equation (\ref{eq:Prop1}) can be proved by
\begin{equation}
\begin{aligned}
\label{eq:conditional_aposterior_probability}
\mathbb{P} (\boldsymbol{Z}_k | \boldsymbol{Z}_{\mathcal{L}\backslash k}, \boldsymbol{N}_T)
&= \mathbb{P} \big(Z_{k, 1}, \boldsymbol{Z}_{k, [2, L]} | \{ Z_{i,\lceil \frac{{t_1}-i+1}{L} \rceil} \}, \{ N_{{t_1}} \}, \{ Z_{i,\lceil \frac{{\tilde{t}}-i+1}{L} \rceil} \}, \{ N_{{\tilde{t}}}  \} \big)
\\
&= \frac{\mathbb{P} \big(Z_{k, 1}, \boldsymbol{Z}_{k, [2, L]}, \{ N_{{t_1}} \}, \{ N_{{\tilde{t}}} \} \mid \{ Z_{i,\lceil \frac{{t_1}-i+1}{L} \rceil} \}, \{ Z_{i,\lceil \frac{{\tilde{t}}-i+1}{L} \rceil}   \} \big)}{\sum_{Z_{k, 1}} \sum_{\boldsymbol{Z}_{k, [2, L]}} \mathbb{P} \big(Z_{k, 1}, \boldsymbol{Z}_{k, [2, L]}, \{ N_{{t_1}} \}, \{ N_{{\tilde{t}}} \} \mid \{ Z_{i,\lceil \frac{{t_1}-i+1}{L} \rceil} \}, \{ Z_{i,\lceil \frac{{\tilde{t}}-i+1}{L} \rceil}   \} \big)}
\\
&= \frac{\mathbb{P} \big( \{ N_{{t_1}} \} \mid Z_{k, 1}, \{ Z_{i,\lceil \frac{{t_1}-i+1}{L} \rceil} \} \big) \mathbb{P} \big(Z_{k, 1} \big) \mathbb{P} \big( \{ N_{{\tilde{t}}} \} \mid \boldsymbol{Z}_{k, [2, L]}, \{ Z_{i,\lceil \frac{{\tilde{t}}-i+1}{L} \rceil} \} \big) \mathbb{P} \big( \boldsymbol{Z}_{k, [2, L]} \big)}
{\sum_{Z_{k, 1}} \sum_{\boldsymbol{Z}_{k, [2, L]}} \mathbb{P} \big( \{ N_{{t_1}} \} \mid Z_{k, 1}, \{ Z_{i,\lceil \frac{{t_1}-i+1}{L} \rceil} \} \big) \mathbb{P} \big(Z_{k, 1} \big) \mathbb{P} \big( \{ N_{{\tilde{t}}} \} \mid \boldsymbol{Z}_{k, [2, L]}, \{ Z_{i,\lceil \frac{{\tilde{t}}-i+1}{L} \rceil} \} \big) \mathbb{P} \big( \boldsymbol{Z}_{k, [2, L]} \big) }
\\
&= \frac{\mathbb{P} \big( \{ N_{{t_1}} \} \mid Z_{k, 1}, \{ Z_{i,\lceil \frac{{t_1}-i+1}{L} \rceil} \} \big) \mathbb{P} \big(Z_{k, 1} \big)}{\sum_{Z_{k, 1}} \mathbb{P} \big( \{ N_{{t_1}} \} \mid Z_{k, 1}, \{ Z_{i,\lceil \frac{{t_1}-i+1}{L} \rceil} \} \big) \mathbb{P} \big(Z_{k, 1} \big)}
\frac{\mathbb{P} \big( \{ N_{{\tilde{t}}} \} \mid \boldsymbol{Z}_{k, [2, L]}, \{ Z_{i,\lceil \frac{{\tilde{t}}-i+1}{L} \rceil} \} \big) \mathbb{P} \big( \boldsymbol{Z}_{k, [2, L]} \big)}{
\sum_{\boldsymbol{Z}_{k, [2, L]}}  \mathbb{P} \big( \{ N_{{\tilde{t}}} \} \mid \boldsymbol{Z}_{k, [2, L]}, \{ Z_{i,\lceil \frac{{\tilde{t}}-i+1}{L} \rceil} \} \big) \mathbb{P} \big( \boldsymbol{Z}_{k, [2, L]} \big) }
\\
&= \mathbb{P} \big( Z_{k, 1} \mid \{ N_{{t_1}} \}, \{ Z_{i,\lceil \frac{{t_1}-i+1}{L} \rceil} \} \big)
\mathbb{P} \big( \boldsymbol{Z}_{k, [2, L]} \mid \{ N_{{\tilde{t}}} \}, \{ Z_{i,\lceil \frac{{\tilde{t}}-i+1}{L} \rceil} \} \big),
\end{aligned}
\end{equation}
where $\boldsymbol{Z}_{k, [2, M]} = [Z_{k, 2}, Z_{k, 2}, \dots, Z_{k, M}]$, and the indexes involved in the brackets $i \in \mathcal{L} \backslash k$, $k \leq t_1 \leq k+L-1$ and $k+L \leq \tilde{t} \leq k+ML-1$.
Re-factorizing Equation (\ref{eq:conditional_aposterior_probability}), we have
\begin{equation}
\begin{aligned}
\label{eq:Prop3}
\mathbb{P} (\boldsymbol{Z}_k | \boldsymbol{Z}_{\mathcal{M}\backslash k}, \boldsymbol{N}_T)
&= \mathbb{P} \big( Z_{k, 1} \mid \{ N_{{t_1}} \}, \{ Z_{i,\lceil \frac{{t_1}-i+1}{L} \rceil} \} \big)
\mathbb{P} \big( Z_{k, 2} \mid \{ N_{{t_2}} \}, \{ Z_{i,\lceil \frac{{t_2}-i+1}{L} \rceil} \} \big)
\mathbb{P} \big( \boldsymbol{Z}_{k, [3, M]} \mid \{ N_{\tilde{t}} \}, \{ Z_{i,\lceil \frac{\tilde{t}-i+1}{L} \rceil} \} \big),
\end{aligned}
\end{equation}
where the indexes involved in the brackets $i \in \mathcal{L} \backslash k$, $k+L \leq t_2 \leq k+2L-1$ and $k+2L \leq \tilde{t} \leq k+ML-1$.
Re-factorize Equation (\ref{eq:Prop3}) for $M-2$ times, we have
\begin{equation}
\begin{aligned}
\mathbb{P} (\boldsymbol{Z}_k | \boldsymbol{Z}_{\mathcal{L}\backslash k}, \boldsymbol{N}_T) = \prod^{M}_{j=1} \mathbb{P} \big(Z_{k, j} | \{ Z_{i,\lceil \frac{t_j-i+1}{L} \rceil} \}, \{ N_{t_j} \} \big),
\end{aligned}
\end{equation}
where the indexes involved in the brackets $i \in \mathcal{L} \backslash k$ and $k+(j-1)L \leq t_j \leq k+jL-1$.

\subsection{Proof of conditional entropies}

We prove the proposition by Equation (\ref{eq:entropy_2users_proof}) based on Proposition 1.
\begin{equation}
\begin{aligned}
\label{eq:entropy_2users_proof}
\mathrm{H} (\boldsymbol{Z}_k | \boldsymbol{Z}_{\mathcal{L}\backslash k}, \boldsymbol{N}_T)
&= \sum_{\!\!\!\!\!\!\!\!\!\! \boldsymbol{Z}_{\mathcal{L}} \in \mathscr{B}^{ML}} \sum_{\boldsymbol{N}_T \in \mathbb{N}^T} \!\!\! \mathbb{P} (\boldsymbol{Z}_{\mathcal{L}}, \boldsymbol{N}_T) \log_2  \mathbb{P} (\boldsymbol{Z}_k | \boldsymbol{Z}_{\mathcal{L}\backslash k}, \boldsymbol{N}_T)
\\
&= \sum_{\boldsymbol{Z}_{\mathcal{L}} \in \mathscr{B}^{ML}} \mathbb{P} (\boldsymbol{Z}_{\mathcal{L}}) \!\! \sum_{\boldsymbol{N}_T \in \mathbb{N}^T} \!\!\! \mathbb{P} (\boldsymbol{N}_T | \boldsymbol{Z}_{\mathcal{L}}) \log_2  \mathbb{P} (\boldsymbol{Z}_k | \boldsymbol{Z}_{\mathcal{L}\backslash k}, \boldsymbol{N}_T)
\\
&= \sum_{\boldsymbol{Z}_{\mathcal{L}} \in \mathscr{B}^{ML}} \mathbb{P} (\boldsymbol{Z}_{\mathcal{L}}) \!\! \sum_{\boldsymbol{N}_T \in \mathbb{N}^T} \!\!\! \mathbb{P} (\boldsymbol{N}_T | \boldsymbol{Z}_{\mathcal{L}}) \log_2  \prod^{M}_{j=1} \mathbb{P} \big(Z_{k, j} | \{ Z_{i,\lceil \frac{t_j-i+1}{L} \rceil}\}, \{N_{t_j} \} \big)
\\
&= \sum_{\boldsymbol{Z}_{\mathcal{L}} \in \mathscr{B}^{ML}} \mathbb{P} (\boldsymbol{Z}_{\mathcal{L}}) \!\! \sum_{\boldsymbol{N}_T \in \mathbb{N}^T} \!\!\! \mathbb{P} (\boldsymbol{N}_T | \boldsymbol{Z}_{\mathcal{L}}) \log_2 \prod^{M}_{j=1} \frac{\mathbb{P} \big(Z_{k, j}, \{N_{t_j}\} | \{ Z_{i,\lceil \frac{t_j-i+1}{L} \rceil}\} \big)}{\mathbb{P} \big( \{N_{t_j}\} | \{ Z_{i,\lceil \frac{t_j-i+1}{L} \rceil}\} \big)}
\\
&= \sum^{M}_{j=1} \sum_{ Z_{k,j} \in \mathscr{B}} \Big( \!\!\! \sum_{Z_{i, \lceil (t-i+1)/L \rceil}\in\mathscr{B}} \!\!\!\!\!\! \Big)^{k+(j-1)L \leq t \leq k+jL-1 \hfill \atop i\in\mathcal{L}\backslash k \hfill} \!\!\!\!\!\!\!\!\!\!\!\!\!\!\!\! \mathbb{P}(Z_{k,j}) \Bigg[ \prod_{ i\in\mathcal{L}\backslash k}  \mathbb{P} (\{ Z_{i,\lceil \frac{t_j-i+1}{L} \rceil} \}) \Bigg] \sum_{\{ N_{t_j} \} \in \mathbb{N}^{L}} \!\!\! \mathbb{P} (\{ N_{t_j} \} | Z_{k,j}, \{ Z_{i,\lceil \frac{t_j-i+1}{L} \rceil} \})
\\
&\log_2  \frac{\mathbb{P} \big(Z_{k, j}, \{N_{t_j}\} | \{ Z_{i,\lceil \frac{t_j-i+1}{L} \rceil}\} \big)}{\mathbb{P} \big( \{N_{t_j}\} | \{ Z_{i,\lceil \frac{t_j-i+1}{L} \rceil}\} \big)}
\\
&= \sum^{M}_{j=1} \sum_{ Z_{k,j} \in \mathscr{B}} \Big( \!\!\! \sum_{Z_{i, \lceil (t-i+1)/L \rceil}\in\mathscr{B}} \!\!\!\!\!\! \Big)^{k+(j-1)L \leq t \leq k+jL-1 \hfill \atop i\in\mathcal{L}\backslash k \hfill} \!\!\!\!\!\!\!\!\!\!\!\!\!\!\!\! \mathbb{P}(Z_{k,j}) \Bigg[ \prod_{ i\in\mathcal{L}\backslash k} \prod_{t=k+(j-1)L}^{k+jL-1} \mathbb{P} (Z_{i,\lceil \frac{t-i+1}{L} \rceil}) \Bigg] \sum_{ \{ N_{t_j} \} \in \mathbb{N}^L} \Bigg[ \prod_{t=k+(j-1)L}^{k+jL-1} \mathbb{P} (N_t | Z_{k,j}, \{Z_{i,\lceil \frac{t-i+1}{L} \rceil}\}) \Bigg]
\\
&\log_2  \frac{\mathbb{P} \big(Z_{k, j}, \{N_{t_j}\} | \{ Z_{i,\lceil \frac{t_j-i+1}{L} \rceil}\} \big)}{\mathbb{P} \big( \{N_{t_j}\} | \{ Z_{i,\lceil \frac{t_j-i+1}{L} \rceil}\} \big)}
\\
&= \sum^{M}_{j=1} \sum_{ Z_{k,j} \in \mathscr{B}} \Big( \!\!\! \sum_{Z_{i, \lceil (t-i+1)/L \rceil}\in\mathscr{B}} \!\!\!\!\!\! \Big)^{k+(j-1)L \leq t \leq k+jL-1 \hfill \atop i\in\mathcal{L}\backslash k \hfill} \!\!\!\!\!\!\!\!\!\!\!\!\!\!\!\! \mathbb{P}(Z_{k,j}) \Bigg[ \prod_{ i\in\mathcal{L}\backslash k} \prod_{t=k+(j-1)L}^{k+jL-1} \mathbb{P} (Z_{i,\lceil \frac{t-i+1}{L} \rceil}) \Bigg] \sum_{ \{ N_{t_j} \} \in \mathbb{N}^L} \Bigg[ \prod_{t=k+(j-1)L}^{k+jL-1} \mathbb{P} (N_t | Z_{k,j}, \{Z_{i,\lceil \frac{t-i+1}{L} \rceil}\}) \Bigg]
\\
&\log_2  \frac{\mathbb{P} \big(Z_{k,j}\big) \prod_{t=k+(j-1)L}^{k+jL-1} \mathbb{P} (N_t | Z_{k,j}, \{Z_{i,\lceil \frac{t-i+1}{L} \rceil}\})}{\sum_{ Z_{k,j} \in \mathscr{B}} \mathbb{P} \big(Z_{k,j}\big) \prod_{t=k+(j-1)L}^{k+jL-1} \mathbb{P} (N_t | Z_{k,j}, \{Z_{i,\lceil \frac{t-i+1}{L} \rceil}\})}
\end{aligned}
\end{equation}

Typically, for single user transmission, the prior probability of the transmitted symbols remains constant, i. e, $q_{i,j} = q$ for $1 \leq i \leq L$ and $1 \leq j \leq M$.
Consequently, each term in $\sum^{M}_{j=1} [\bullet]$ remain constant for $2 \leq j \leq M-1$.
When $j=1$ or $j=M$, $\lceil \frac{t-i+1}{L} \rceil$ may equal $0$ or $M+1$, we define $Z_{i, 0} = Z_{i, M+1} = 0$ for $1 \leq i \leq L$ due to the finite number of transmitted symbols.
Neglecting the effect of $j=1$ and $j=M$, we have that $\sum^{M}_{j=1} \sum_{ Z_{k,j} \in \mathscr{B}} [\bullet] = M \sum_{ Z_{k,j_0} \in \mathscr{B}} [\bullet]$, where $j_0$ can take any integer value in $[2, M-1]$; and
\begin{equation}
\Big\lceil \frac{t-i+1}{L} \Big\rceil =\left\{
\begin{array}{ll}
j_0, & \textrm{if $1 \leq i < k, k+(j_0-1)L \leq t \leq i + j_0 L - 1$}; \\
j_0+1, & \textrm{if $1 \leq i < k, i + j_0L \leq t \leq k + j_0L - 1$}; \\
j_0-1, & \textrm{if $k < i \leq L, k+(j_0-1)L \leq t \leq i + (j_0-1) L - 1$}; \\
j_0, & \textrm{if $k < i \leq L, i + (j_0-1)L \leq t \leq k + j_0L - 1$}.
\end{array}
\right.
\end{equation}

Letting $j_0 = 2$, we have the following simplified form of Equation (\ref{eq:entropy_2users_proof}),
\begin{equation}
\begin{aligned}
\label{eq:entropy_2users_proof2}
\mathrm{H} (\boldsymbol{Z}_k | \boldsymbol{Z}_{\mathcal{L}\backslash k}, \boldsymbol{N}_T)
&= M \sum_{ Z_{k} \in \mathscr{B}} \bigg( \sum_{ Z_{i,2} \in \mathscr{B}} \sum_{ Z_{i,3} \in \mathscr{B}} \bigg)^{1 \leq i < k}
\bigg( \sum_{ Z_{i,1} \in \mathscr{B}} \sum_{ Z_{i,2} \in \mathscr{B}} \bigg)^{k < i \leq L} \mathbb{P}(Z_{k}) \Bigg[ \prod_{1 \leq i < k} \mathbb{P} (Z_{i,2}) \mathbb{P} (Z_{i,3}) \prod_{k < i \leq L} \mathbb{P} (Z_{i,1}) \mathbb{P} (Z_{i,2}) \Bigg]
\\
&\!\!\!\!\!\!\!\!\sum_{ \{ N_{k+L}, \ldots, N_{k+2L-1} \} \in \mathbb{N}^L} \Bigg[ \prod_{t=k+L}^{k+2L-1} \mathbb{P} (N_t | Z_k, \{Z_{i,\lceil \frac{t-i+1}{L} \rceil}\}) \Bigg]
\log_2  \frac{\mathbb{P} \big(Z_{k}\big) \prod_{t=k+L}^{k+2L-1} \mathbb{P} (N_t | Z_{k}, \{Z_{i,\lceil \frac{t-i+1}{L} \rceil}\})}{\sum_{ Z_{k} \in \mathscr{B}} \mathbb{P} \big(Z_{k}\big) \prod_{t=k+L}^{k+2L-1} \mathbb{P} (N_t | Z_{k}, \{Z_{i,\lceil \frac{t-i+1}{L} \rceil}\})}.
\end{aligned}
\end{equation}

\textcolor{red}{
\subsection{Derivation of $\hat{\boldsymbol{\Lambda}}^{(v)}$  in the M-Step of Channel Estimation}}

The likelihood function is given by
\begin{equation}
\begin{aligned}
\label{eq:EM_M_proof1}
\mathcal{L} (\boldsymbol{N}^p | \lambda_{\boldsymbol{s}_i} = \hat{\lambda}^{(v)}_{\boldsymbol{s}_i})
&= \sum^{T_p}_{t=1} \log \sum_{\boldsymbol{s}_i \in \mathscr{B}^{L\setminus L_p}} \mathbb{P} (\boldsymbol{N}^p, \boldsymbol{S}^p_t = \boldsymbol{s}_i | \lambda_{\boldsymbol{s}_i} = \hat{\lambda}^{(v)}_{\boldsymbol{s}_i})
\\
&\geq \sum^{T_p}_{t=1} \sum_{\boldsymbol{s}_i \in \mathscr{B}^{L\setminus L_p}} Q^{(v)}(\boldsymbol{S}^p_t = \boldsymbol{s}_i) \log \frac{\mathbb{P} (\boldsymbol{N}^p, \boldsymbol{S}^p_t = \boldsymbol{s}_i | \lambda_{\boldsymbol{s}_i} = \hat{\lambda}^{(v)}_{\boldsymbol{s}_i})}{Q^{(v)}(\boldsymbol{S}^p_t = \boldsymbol{s}_i)}.
\end{aligned}
\end{equation}
Letting $\tilde{\mathcal{L}}^{(v)} (\boldsymbol{N}^p | \lambda_{\boldsymbol{s}_i} = \hat{\lambda}^{(v)}_{\boldsymbol{s}_i})$ denote the last term of above inequality, we have that
\begin{equation}
\begin{aligned}
&\tilde{\mathcal{L}}^{(v)} (\boldsymbol{N}^p | \lambda_{\boldsymbol{s}_i} = \hat{\lambda}^{(v)}_{\boldsymbol{s}_i})
\sim \!\! \sum^{T_p}_{t=1} \!\! \sum_{\boldsymbol{s}_i \in \mathscr{B}^{L\setminus L_p}} \!\! Q^{(v)}(\boldsymbol{S}^p_t = \boldsymbol{s}_i) \Big( N^p_{t} \log \tau_t \hat{\lambda}^{(v)}_{\boldsymbol{s}_i} \!-\! \log N^p_{t} ! - \tau_t \hat{\lambda}^{(v)}_{\boldsymbol{s}_i} \Big).
\nonumber
\end{aligned}
\end{equation}
Hence, the partial derivative of likelihood function is given by
\begin{equation}
\begin{aligned}
&\frac{\partial}{\partial \lambda^{(v)}_{\boldsymbol{s}_i}} \tilde{\mathcal{L}}^{(v)} (\boldsymbol{N}^p = \boldsymbol{n} | \lambda_{\boldsymbol{s}_i} = \hat{\lambda}^{(v)}_{\boldsymbol{s}_i})
= \sum^{T_p}_{t=1} Q^{(v)}(\boldsymbol{S}^p_t = \boldsymbol{s}_i) \Bigg( \frac{N^p_{t}}{\hat{\lambda}^{(v)}_{\boldsymbol{s}_i}} - \tau_t \Bigg).
\end{aligned}
\end{equation}
Letting $\frac{\partial}{\partial \lambda_{\boldsymbol{s}_j}} \mathcal{L} (\boldsymbol{N}^p = \boldsymbol{n} | \lambda_{\boldsymbol{s}_j} = \hat{\lambda}^{(v)}_{\boldsymbol{s}_j}) = 0$, we have that
\begin{equation}
\begin{aligned}
\label{eq:EM_M_proof2}
\hat{\lambda}^{(v)}_{\boldsymbol{s}_j} &= \arg \max \tilde{\mathcal{L}}^{(v)} (\boldsymbol{N}^p = \boldsymbol{n} | \lambda_{\boldsymbol{s}_i} = \hat{\lambda}^{(v)}_{\boldsymbol{s}_i})
= \frac{ \sum^{T_p}_{t=1} Q^{(v)}(\boldsymbol{S}^p_t = \boldsymbol{s}_i) N^p_{t} }{ \sum^{T_p}_{t=1} Q^{(v)}(\boldsymbol{S}^p_t = \boldsymbol{s}_i) \tau_t}.
\end{aligned}
\end{equation}

\bibliographystyle{ieeetr}
\bibliography{random_access_reference_abbr}

\begin{thebibliography}{10}

\bibitem{xu2008ultraviolet}
Z.~Xu and B.~M. Sadler, ``{U}ltraviolet communications: potential and
  state-of-the-art,'' {\em IEEE Commun. Mag.}, vol.~46, no.~5, May 2008.

\bibitem{xu2015effects}
C.~Xu, H.~Zhang, and J.~Cheng, ``{E}ffects of haze particles and fog droplets
  on {N}{L}{O}{S} ultraviolet communication channels,'' {\em Opt. Express},
  vol.~23, no.~18, pp.~23259--23269, 2015.

\bibitem{sun2016closed}
Y.~Sun and Y.~Zhan, ``{C}losed-form impulse response model of non-line-of-sight
  single-scatter propagation,'' {\em JOSA A}, vol.~33, no.~4, pp.~752--757,
  2016.

\bibitem{raptis2016power}
N.~Raptis, E.~Pikasis, and D.~Syvridis, ``{P}ower losses in diffuse ultraviolet
  optical communications channels,'' {\em Opt. Lett.}, vol.~41, no.~18,
  pp.~4421--4424, 2016.

\bibitem{wang2017turbulence}
K.~Wang, C.~Gong, D.~Zou, and Z.~Xu, ``Turbulence channel modeling and
  non-parametric estimation for optical wireless scattering communication,''
  {\em IEEE/OSA J. Lightwave Technol.}, vol.~35, pp.~2746--2756, Jul. 2017.

\bibitem{davis1980capacity}
M.~Davis, ``{C}apacity and cutoff rate for {P}oisson-type channels,'' {\em IEEE
  Trans. Inform. Theory}, vol.~26, no.~6, pp.~710--715, Jun. 1980.

\bibitem{wyner1988capacity}
A.~D. Wyner, ``{C}apacity and error exponent for the direct detection photon
  channel. {I}-{I}{I},'' {\em IEEE Trans. Inform. Theory}, vol.~34, no.~6,
  pp.~1449--1461, Jun. 1988.

\bibitem{frey1991information}
M.~R. Frey, ``{I}nformation capacity of the {P}oisson channel,'' {\em IEEE
  Trans. Inform. Theory}, vol.~37, no.~2, pp.~244--256, Feb. 1991.

\bibitem{lapidoth2009capacity}
A.~Lapidoth and S.~M. Moser, ``{O}n the capacity of the discrete-time {P}oisson
  channel,'' {\em IEEE Trans. Inform. Theory}, vol.~55, no.~1, pp.~303--322,
  Jan. 2009.

\bibitem{lapidoth2011discrete}
A.~Lapidoth, J.~H. Shapiro, V.~Venkatesan, and L.~Wang, ``{T}he discrete-time
  {P}oisson channel at low input powers,'' {\em IEEE Trans. Inform. Theory},
  vol.~57, no.~6, pp.~3260--3272, Jun. 2011.

\bibitem{chakraborty2007poisson}
K.~Chakraborty and P.~Narayan, ``{T}he {P}oisson fading channel,'' {\em IEEE
  Trans. Inform. Theory}, vol.~53, pp.~2349--2364, Jul. 2007.

\bibitem{chakraborty2008outage}
K.~Chakraborty, S.~Dey, and M.~Franceschetti, ``{O}utage capacity of
  {M}{I}{M}{O} {P}oisson fading channels,'' {\em IEEE Trans. Inform. Theory},
  vol.~54, pp.~4887--4907, Nov. 2008.

\bibitem{lai2015capacity}
L.~Lai, Y.~Liang, and S.~S. Shitz, ``{O}n the capacity bounds for {P}oisson
  interference channels,'' {\em IEEE Trans. Inform. Theory}, vol.~61,
  pp.~223--238, Jan. 2015.

\bibitem{kim2016superposition}
H.~Kim, B.~Nachman, and A.~El~Gamal, ``{S}uperposition coding is almost always
  optimal for the {P}oisson broadcast channel,'' {\em IEEE Trans. Inform.
  Theory}, vol.~62, pp.~1782--1794, Apr. 2016.

\bibitem{mehravari1984poisson}
N.~Mehravari and T.~Berger, ``{P}oisson multiple-access contention with binary
  feedback,'' {\em IEEE Trans. Inform. Theory}, vol.~30, no.~5, pp.~745--751,
  May 1984.

\bibitem{lapidoth1998poisson}
A.~Lapidoth and S.~Shamai, ``{T}he {P}oisson multiple-access channel,'' {\em
  IEEE Trans. Inform. Theory}, vol.~44, no.~2, pp.~488--501, Feb. 1998.

\bibitem{bross2001error}
S.~I. Bross, M.~V. Burnashev, and S.~Shamai, ``{E}rror exponents for the
  two-user {P}oisson multiple-access channel,'' {\em IEEE Trans. Inform.
  Theory}, vol.~47, no.~5, pp.~1999--2016, May 2001.

\bibitem{wang2018signal}
G.~Wang, C.~Gong, and Z.~Xu, ``Signal characterization for multiple access
  non-line of sight scattering communication,'' {\em IEEE Trans. Commun.},
  vol.~66, pp.~4138--4154, Sept. 2018.

\bibitem{raychaudhuri1981performance}
D.~Raychaudhuri, ``Performance analysis of random access packet-switched code
  division multiple access systems,'' {\em IEEE Trans. on Commun.}, vol.~29,
  no.~6, pp.~895--901, 1981.

\bibitem{ding2011correction}
H.~Ding, G.~Chen, A.~K. Majumdar, B.~M. Sadler, and Z.~Xu, ``{C}orrection to
  {M}odeling of non-line-of-sight ultraviolet scattering channels for
  communication,'' {\em IEEE J. Select. Areas Commun.}, vol.~29, pp.~250--250,
  Jan. 2011.

\bibitem{zuo2013closed}
Y.~Zuo, H.~Xiao, J.~Wu, W.~Li, and J.~Lin, ``{C}losed-form path loss model of
  non-line-of-sight ultraviolet single-scatter propagation,'' {\em Opt. Lett.},
  vol.~38, pp.~2116--2118, Dec. 2013.

\bibitem{gong2015channel}
C.~Gong and Z.~Xu, ``{C}hannel estimation and signal detection for optical
  wireless scattering communication with inter-symbol interference,'' {\em IEEE
  Trans. Wireless Commun.}, vol.~14, pp.~5326--5337, Oct. 2015.

\bibitem{gong2015lmmse}
C.~Gong and Z.~Xu, ``{L}{M}{M}{S}{E} {S}{I}{M}{O} receiver for short-range
  non-line-of-sight scattering communication,'' {\em IEEE Trans. Wireless
  Commun.}, vol.~14, pp.~5338--5349, Oct. 2015.

\bibitem{gong2015non}
C.~Gong and Z.~Xu, ``{N}on-line of sight optical wireless relaying with the
  photon counting receiver: {A} count-and-forward protocol,'' {\em IEEE Trans.
  Wireless Commun.}, vol.~14, pp.~376--388, Jan. 2015.

\bibitem{rabiner1986introduction}
L.~Rabiner and B.~Juang, ``{A}n introduction to hidden {M}arkov models,'' {\em
  IEEE ASSP Mag.}, vol.~3, no.~1, pp.~4--16, Jun. 1986.

\bibitem{ephraim2002hidden}
Y.~Ephraim and N.~Merhav, ``{H}idden markov processes,'' {\em IEEE Trans.
  Inform. Theory}, vol.~48, no.~6, pp.~1518--1569, Jun. 2002.

\bibitem{liu2018hidden}
X.~Liu, C.~Gong, B.~Liu, S.~Li, and Z.~Xu, ``Hidden markov model based signal
  characterization for weak light communication,'' {\em IEEE/OSA J. Lightwave
  technol.}, vol.~36, pp.~1730--1738, Sept. 2018.

\bibitem{moon1996expectation}
T.~K. Moon, ``{T}he expectation-maximization algorithm,'' {\em IEEE Signal
  Processing Mag.}, vol.~13, no.~6, pp.~47--60, Nov. 1996.

\bibitem{dempster1977maximum}
A.~P. Dempster, N.~M. Laird, and D.~B. Rubin, ``{M}aximum likelihood from
  incomplete data via the {EM} algorithm,'' {\em Journal of the royal
  statistical society. Series B (methodological)}, pp.~1--38, 1977.

\bibitem{forney1973viterbi}
G.~D. Forney, ``{T}he viterbi algorithm,'' {\em Proc. IEEE}, vol.~61, no.~3,
  pp.~268--278, Mar. 1973.

\bibitem{bahl1974optimal}
L.~Bahl, J.~Cocke, F.~Jelinek, and J.~Raviv, ``{O}ptimal decoding of linear
  codes for minimizing symbol error rate (corresp.),'' {\em IEEE Trans. Inform.
  Theory}, vol.~20, no.~2, pp.~284--287, Feb. 1974.

\bibitem{hagenauer1996iterative}
J.~Hagenauer, E.~Offer, and L.~Papke, ``{I}terative decoding of binary block
  and convolutional codes,'' {\em IEEE Trans. Inform. Theory}, vol.~42, no.~2,
  pp.~429--445, Feb. 1996.

\bibitem{benedetto1997soft}
S.~Benedetto, D.~Divsalar, G.~Montorsi, and F.~Pollara, ``{A} soft-input
  soft-output {A}{P}{P} module for iterative decoding of concatenated codes,''
  {\em IEEE Commun. Lett.}, vol.~1, no.~1, pp.~22--24, Jan. 1997.

\bibitem{cover2012elements}
T.~M. Cover and J.~A. Thomas, {\em Elements of information theory}.
\newblock John Wiley \& Sons, 2012.

\bibitem{hernando2005efficient}
D.~Hernando, V.~Crespi, and G.~Cybenko, ``{E}fficient computation of the hidden
  {M}arkov model entropy for a given observation sequence,'' {\em IEEE Trans.
  Inform. Theory}, vol.~51, no.~7, pp.~2681--2685, Jul. 2005.

\bibitem{yang2004design}
M.~Yang, W.~E. Ryan, and Y.~Li, ``{D}esign of efficiently encodable
  moderate-length high-rate irregular {L}{D}{P}{C} codes,'' {\em IEEE Trans.
  Commun.}, vol.~52, no.~4, pp.~564--571, Apr. 2004.

\bibitem{fossorier2004quasicyclic}
M.~P. Fossorier, ``{Q}uasicyclic low-density parity-check codes from circulant
  permutation matrices,'' {\em IEEE Trans. Inform. Theory}, vol.~50, no.~8,
  pp.~1788--1793, Aug. 2004.

\bibitem{chen2005reduced}
J.~Chen, A.~Dholakia, E.~Eleftheriou, M.~P. Fossorier, and X.-Y. Hu,
  ``{R}educed-complexity decoding of {L}{D}{P}{C} codes,'' {\em IEEE Trans.
  Commun.}, vol.~53, no.~8, pp.~1288--1299, Aug. 2005.

\end{thebibliography}

\end{spacing}
\end{document}